\newcommand*\ExpandableInput[1]{\@@input#1 }
\newcommand*{\algotitle}[2]{%
  \stepcounter{algocf}%
  \hypertarget{algocf.title.\theHalgocf}{}%
  \NR@gettitle{#1}%
  \label{#2}%
  \addtocounter{algocf}{-1}%
}
\newcommand{\set}[1]{\{#1\}}
\newcommand{\vcdim}{\mathrm{VCdim}}
\newcommand{\floor}[1]{\lfloor #1\rfloor}
\newcommand{\card}[1]{\left\| #1\right\|}
\newcommand{\cH}{\mathcal{H}}
\newcommand{\R}{\mathcal{R}}
\renewcommand{\P}{\mathcal{P}}
\def\bigO{\ensuremath{\mathcal{O}}\xspace}
\newcommand{\graph}[1]{{#1}}
\newtheorem{theorem}{Theorem}
\newtheorem{lemma}{Lemma}
\newtheorem{corollary}{Corollary}
\newtheorem*{claim*}{Claim}
\title{Practical Computation of Graph VC-Dimension\footnote{This work was supported by a grant of the Romanian Ministry of Research, Innovation and Digitalization, CCCDI - UEFISCDI, project number PN-III-P2-2.1-PED-2021-2142, within PNCDI III, and the French National Research Agency (ANR) through project Tempogral with reference number ANR-22-CE48-0001.}} 
\author[1]{David Coudert}
\affil[1]{Université Côte d'Azur, Inria, I3S, CNRS, France}
\author[2]{Mónika Csikós}
\affil[2]{IRIF, CNRS and Université Paris Cité, Paris, France}
\author[3]{Guillaume Ducoffe}
\affil[3]{University of Bucharest, Romania \& National Institute for Research and Development in Informatics, Romania}
\author[4]{Laurent Viennot}
 \affil[4]{Inria, DI ENS, Paris, France}
\begin{document}

\maketitle
\begin{abstract}
For any set system $\cH=(V,\R), \ \R \subseteq 2^V$, a subset $S \subseteq V$ is called \emph{shattered} if every $S' \subseteq S$ results from the intersection of $S$ with some set in $\R$. The \emph{VC-dimension} of $\cH$ is the size of a largest shattered set in $V$. In this paper, we focus on the problem of computing the VC-dimension of graphs. In particular, given a graph $G=(V,E)$, the VC-dimension of $G$ is defined as the VC-dimension of $(V, \mathcal N)$, where $\mathcal N$ contains each subset of $V$ that can be obtained as the closed neighborhood of some vertex $v \in V$ in $G$. 
Our main contribution is an algorithm for computing the VC-dimension of any graph, whose effectiveness is shown through experiments on various types of practical graphs, including graphs with millions of vertices. A key aspect of its efficiency resides in the fact that practical graphs have small VC-dimension, up to 8 in our experiments. As a side-product, we present several new bounds relating the graph VC-dimension to other classical graph theoretical notions. We also establish the $W[1]$-hardness of the graph VC-dimension problem by extending a previous result for arbitrary set systems.
\end{abstract}

\textbf{Keywords:} {VC-dimension, graph, algorithm}

\clearpage
\section{Introduction}
Since the seminal work of Vapnik and Chervonenkis~\cite{doi:10.1137/1116025}, \emph{VC-dimension} is one of the basic quantities describing the complexity of a set system. As such, VC-dimension is the foundation of many results in mathematics and theoretical computer science: it plays a central role in uniform sampling guarantees and is often required as an input parameter of algorithms.
For instance, one of the fundamental  results in computational learning theory states that a set system is PAC-learnable if and only if it has bounded VC-dimension~\cite{blumer1989learnability}.
The Sample Compression Conjecture, which has been called in~\cite{chalopin2022unlabeled} one of the oldest open problems in theoretical machine learning, is also related to VC-dimension~\cite{floyd1995sample}.
VC-dimension has become a key concept in other fields as well. In computational geometry, bounds on the VC-dimension of various geometric set systems ({\it e.g.}, ones induced by half-spaces, balls or axis-parallel boxes) are essential parameters in methods developed for approximating and processing geometric data~\cite{despres2017vapnikchervonenkis,wenocur1981some}.
Indeed, set systems with bounded VC-dimension admit structures such as small-size $\varepsilon$-nets~\cite{haussler1986epsilon} and $\varepsilon$-approximations~\cite{doi:10.1137/1116025, LiLS-sample-complexity-learning-S01, csikos2022optimal}, matchings and spanning paths of low crossing numbers~\cite{welzl1988partition,chazelle1989quasi}, colorings with low discrepancy~\cite{Mat99} to name a few.
We refer to the survey~\cite{matouvsek1998geometric} for more details.

Practical applications of this parameter (e.g. in the design of PAC-learning algorithms), require bounds on the VC-dimension of the considered set systems.
However, it was observed that the general bounds found in the literature are of limited use in practice~\cite{holden1995practical}.
Thus, the problem of computing the VC-dimension of set systems has attracted some attention.
It is proved to be $\log$NP-hard~\cite{papadimitriou1996limited}, and $W[1]$-hard for the natural parameterization by the VC-dimension~\cite{downey1993parameterized}.
Furthermore, under plausible complexity hypotheses, the VC-dimension is hard to be approximated within a sub-logarithmic factor in polynomial time~\cite{manurangsi2017inapproximability}.

\smallskip
There is a fast growing body
of literature demonstrating the strong potential of using VC-dimension as a graph  parameter. In recent studies, researchers have made
significant progress in improving results in extremal and algorithmic graph theory by limiting the problems to
objects with bounded VC-dimension \cite{LT10,FPS21,ducoffe2021computing,ducoffe2022diameterb,bousquet2015identifying,bonamy2021eptas}, including an EPTAS for the {\sc Maximum Clique} problem~\cite{bonamy2021eptas} and subquadratic-time algorithms for diameter computation~\cite{ducoffe2021computing,ducoffe2022diameterb}. The VC-dimension of a graph was also linked to the complexity of approximating a minimum-cardinality identifying code on hereditary graph classes~\cite{bousquet2015identifying}.

The most commonly used notion of VC-dimension for graphs is defined as the VC-dimension of its neighbourhood set system (see  Section~\ref{sec:definition} for a formal definition). 
More specifically, in this paper we consider the \emph{closed} neighbourhoods of vertices (\textit{i.e.}, each vertex is included in the set of its adjacent vertices).
However, we could instead consider their \emph{open} neighbourhoods.
Both notions result in different but comparable values for the VC-dimension.
The algorithmic applications listed in~\cite{bousquet2015identifying,ducoffe2021computing,ducoffe2022diameterb} are proved using the set systems of closed neighbourhoods in a graph, while those listed in~\cite{bonamy2021eptas,FPS21} are proved using the set systems of open neighbourhoods.
Note that the VC-dimension of other graph-related set systems has been considered in~\cite{chepoi2007covering,chepoi2020density,ducoffe2022diameterb,kranakis1997vc}, with different combinatorial and algorithmic implications.
These alternative VC-dimension parameters are {\em not} considered in our paper. However, it is noteworthy that several of these parameters can be lower bounded by the graph VC-dimension.
More generally, it was observed in~\cite{chalopin2023sample} that every set system $\mathcal{H}$ can be represented as a split graph $G_{\mathcal{H}}$, in such a way that the VC-dimension of $\mathcal{H}$ is equal to the VC-dimension of the closed neighbourhoods of vertices in the stable set of $G_{\mathcal{H}}$. Therefore, graph VC-dimension is as general as the VC-dimension of arbitrary set systems, if we allow ourselves to only consider the closed neighbourhoods of a restricted subset of vertices.


Just like for general set systems, the problem of computing the VC-dimension of a graph is known to be  $\log$NP-hard~\cite{kranakis1997vc}.
However, we are not aware of any previous study on the parameterized complexity of the problem. Similarly, very few is known about the VC-dimension of complex networks. The closest such related work would be~\cite{demaine2019structural}, where the stronger property of bounded expansion is considered.
The VC-dimension of random graphs has been studied in~\cite{anthony1995vapnik}, where for any fixed value $d$, a density threshold for the property of having VC-dimension at most $d$ is derived.

\paragraph*{Our Contributions.} 

While developing improved methods for graphs with bounded VC-dimension is a fruitful direction, it is just as important to provide efficient algorithms and conditions to help computing or approximating the VC-dimension of the input graph.
We address this problem  both from a theoretical and practical point of view.

The \emph{main contribution} of this paper is a practical algorithm for computing the VC-dimension of any graph (Algorithm~\ref{alg:exact}).
Note that a naive algorithm for this problem would consider all vertex subsets of size at most the VC-dimension of the input. 
By contrast, our algorithm repeatedly updates a lower bound on the VC-dimension, so that most unexplored vertex subsets below this bound can be discarded.
Furthermore, while exploring for larger shattered subsets, we use our degree-based upper bounds on the VC-dimension in order to discard at once all vertices of too small degree (at most exponential in the current lower bound).
Similarly, we show that while growing a shattered subset by iteratively adding new vertices, some branches can be ignored using a simple, but surprisingly powerful, upper bound on the size of a largest shattered superset (Lemma~\ref{lem:base}).   
By doing so, we considerably reduce the search space, as evidenced by our experiments on some real-life networks.
We implemented a few more tricks, based on a combination of bit masks and partition refinement techniques, in order to speed up some important routine tasks in the algorithm, such as: the test of whether a given subset is shattered, that of whether a search branch can be pruned, and reduction schemes for the graph to be considered.
Overall, we were able to compute the VC-dimension of graphs with millions of nodes in less than 40 minutes, providing the first practical algorithm for computing graph VC-dimension.
We demonstrate the efficiency of our algorithm on various practical graphs.
To the best of our knowledge, this is the first analysis of the VC-dimension in real networks.
Interestingly, we observe that for all graphs considered in our experiments, the VC-dimension ranges between $3$ and $8$.

\smallskip
Our next contribution is proving that computing the VC-dimension of a graph is a $W[1]$-hard problem for the natural parameterization by the VC-dimension (Theorem~\ref{thm:w1-hard}).
For that, we revisit a previous $W[1]$-hardness proof for arbitrary set systems~\cite{downey1993parameterized}, which we combine with some insights on shattered subsets in graphs from~\cite{ducoffe2022diameter}.

Finally, we note that we obtain a series of new bounds on the VC-dimension with respect to classical graph parameters, such as maximum degree, degeneracy and matching number.
Some of these parameters are included in the setup of ~\cite{bousquet2015identifying} who show that the VC-dimension of a graph can be functionally upper bounded by any hereditary graph parameter that stays unbounded on the following graph classes: split graphs, bipartite graphs and co-bipartite graphs.
However, the bounds that can be derived from~\cite{bousquet2015identifying} are rather rough, due to the use of Ramsey's theory.
By contrast, we give linear and sharp bounds for all the considered parameters.

\smallskip
\textbf{Organization. } After defining the main notions and notation of this paper in Section~\ref{sec:definition}, we prove that computing the VC-dimension of graphs is $W[1]$-hard in Section~\ref{sec:hardness}. Then in Section~\ref{sec:bounds}, we summarise our bounds on the graph VC-dimension.
In Section~\ref{sec:algorithm}, we give a new exact algorithm for computing the VC-dimension of graphs and discuss several possible optimizations. 
In Section~\ref{sec:experiments}, we report our experimental results 
and discuss the advantages of the different optimizations methods. 

\section{Definitions and notation}
\label{sec:definition}

Throughout this note, we use lowercase letters $u,v,x,y,\ldots$ for vertices, uppercase letters $X,Y,Z,\ldots$ for sets of vertices,  calligraphic letters as $\R$ for collections of sets, and we let $\log$ denote the base 2 logarithm.
Given an undirected graph $G=(V,E)$ with $|V|=n$ vertices and $|E|=m$ edges,
let $N_G[v]$ denote the closed neighborhood of $v$ defined as $N_G[v]=\set{u\in V \mid uv\in E \mbox{ or } u= v}$.
We define the degree $\deg(v)=|N_G[v]|$ of a vertex as its closed neighborhood cardinality. We use this unusual convention of counting a vertex in its own degree for the sake of simplicity when considering closed neighborhood sizes. We also define the ball $B_G[v,r]$ centered at a vertex $v$ and with radius $r$ as the set of nodes at distance at most $r$ from $v$. In particular, we have $B_G[v,1]=N_G[v]$.
We omit the ${}_G$ subscript when $G$ is clear from the context.

A set system $\cH=(V,\R)$ (or hypergraph) is defined by a ground set $V$ and a collection $\R$ of subsets of $V$ called ranges.
Recall that a set $X\subseteq V$ is said to be \emph{shattered} by $\R$ (or simply shattered if $\R$ is clear from the context) if for every $Y \subseteq X$ there exists a range $R\in \R$ such that $Y=R\cap X$. For any $R\in \R$, the intersection $Y=R\cap X$ is called the \emph{trace} of $R$ on $X$. The VC-dimension of a graph $G$, denoted by $\vcdim(G)$, is defined as the VC-dimension of its closed neighborhood set system $(V,\set{N_G[v] \mid v\in V})$. A subset $X\subseteq V$ is thus shattered if for every $Y \subseteq X$ there exists a vertex $v_Y$ such that its trace $N_G[v_Y] \cap X$ on $X$ equals $Y$. {When considering such a set system, we say that a vertex $v$ has trace $Y$ on a set $X$ when its closed neighborhood has trace $Y=N_G[v]\cap X$.}

\section{W[1]-hardness}
\label{sec:hardness}
As we have mentioned in the introduction, computing the VC-dimension of graphs is known to be $\log$NP-hard \cite{kranakis1997vc}. We show that it is also W[1]-hard for the natural parameterization by the VC-dimension by showing the following statement in Appendix~\ref{sec:proof-w1-hard}. 

\begin{theorem}\label{thm:w1-hard}
For any graph $G$ and parameter $ k \leq |V(G)|$, there exists a graph $H_G$ such that $G$ contains a $k$-clique if and only if the VC-dimension of $H_G$ is at least $k$.
Furthermore, we can construct $H_G$ from $G$ in $\bigO(k2^kn^2)$ time.
\end{theorem}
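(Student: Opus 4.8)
The plan is to give a parameterized reduction from the $k$-\textsc{Clique} problem, the canonical $W[1]$-complete problem, turning a clique instance into a graph whose VC-dimension is at least $k$ precisely when the clique exists. Following the hint, I would start from the known $W[1]$-hardness reduction for the VC-dimension of abstract set systems \cite{downey1993parameterized}: from an instance $(G,k)$ it produces a set system $\mathcal S=(U,\R)$ in which a set of size $k$ is shattered if and only if $G$ contains a $k$-clique. The remaining, graph-specific work is to \emph{realize} $\mathcal S$ as the closed-neighborhood set system of a graph $H_G$ without changing its VC-dimension, so that $\vcdim(H_G)\ge k$ iff $G$ has a $k$-clique.

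For the realization I would use the incidence (split-graph) construction sketched in the introduction: create one \emph{element vertex} $e_u$ for every $u\in U$, keep these pairwise non-adjacent, and add one \emph{range vertex} $r_R$ for every $R\in\R$, made adjacent to exactly the element vertices $\{e_u : u\in R\}$. The point is that for any set $X$ of element vertices and any range $R$, the trace of the closed neighborhood $N_{H_G}[r_R]$ on $X$ equals $\{e_u : u\in R\}\cap X$, so the range vertices reproduce on the element side exactly the ranges of $\mathcal S$. On the element side the closed-neighborhood traces thus recover the ranges of $\mathcal S$ (together with the trivially available empty and singleton traces, since $N_{H_G}[e_u]\cap X=\{e_u\}\cap X$), so a $k$-clique immediately yields a shattered set of $k$ element vertices realized by the range vertices, giving one direction.

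The hard part will be the converse — showing that \emph{any} shattered set of size $k$ in $H_G$, not only one confined to the element vertices, certifies a $k$-clique. Here I would argue that a maximum shattered set can always be pushed into the element side. The element vertices are independent, so their mutual traces are trivial; the danger is a shattered set that contains range vertices (or mixes the two types), since the closed neighborhood of a range vertex also contains the other range vertices it is adjacent to. This is exactly where I expect to invoke the structural insights on shattered subsets in graphs from \cite{ducoffe2022diameter}: I would use them to bound the VC-dimension contributed by the range side and by mixed sets, establishing that no shattered set of size $k$ can avoid the element vertices unless the clique already exists. Once a shattered set is known to lie in the element side it is a shattered set of $\mathcal S$, and the set-system reduction returns a $k$-clique. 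I would tailor the incidences (e.g.\ keeping the range vertices sparse, or adjusting which gadget sets are added) precisely so that this bounding step goes through, since this control is the crux of the argument.

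Finally, for the running time I would account for the size of the produced instance: the reduction creates $\bigO(2^k)$ range vertices (one per relevant trace derived from the $k$-subsets of $G$), each of whose incidences is computed by an $\bigO(n)$--$\bigO(n^2)$ scan of $G$ together with $\bigO(k)$ bookkeeping per range to decide membership, giving the claimed $\bigO(k2^k n^2)$ total construction time. I would double-check that only $\bigO(2^k)$ distinct ranges are ever needed and that each edge of $H_G$ is produced in constant amortized time, as any superpolynomial dependence beyond the $f(k)=\bigO(k2^k)$ factor would break the parameterized reduction.
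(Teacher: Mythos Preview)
Your high-level plan---start from the Downey--Fellows--Regan set system $\mathcal{H}_G=(X,\R)$ and realize it as a graph via element- and range-vertices---is exactly the route the paper takes. The gap is in the converse direction, which you yourself flag as ``the crux of the argument'' but leave unproven. In the paper this step is not handled by pushing a maximum shattered set to the element side; instead, the construction is modified so that the element side $X$ is made into a \emph{clique} (for $k\ge 3$), yielding a split graph rather than the bipartite graph you describe. With that choice one proves a clean dichotomy: any shattered $k$-set $Y$ satisfies $Y\subseteq X$ or $Y\subseteq\R$. The proof of this dichotomy uses precisely that one side is a clique and the other an independent set (this is the ``insight from \cite{ducoffe2022diameter}''); with your bipartite construction both sides are independent and that argument does not apply. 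Once the dichotomy holds, $Y\subseteq X$ forces $Y$ to be shattered in $\mathcal{H}_G$ (the clique on $X$ means any $v\in X$ has full trace $Y$, so all proper traces must come from $\R$), while $Y\subseteq\R$ is killed by a direct count: one shows $Y\subseteq\R_3$ and that vertices $(x,i),(y,i)\in X$ with the same index $i$ have identical traces on $\R_3$, so at most $k$ traces come from $X$, plus $k+1$ from $\R$ itself, contradicting $2^k$.

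A smaller inaccuracy: the number of range vertices is not $\bigO(2^k)$. The family $\R_2$ alone contributes $\Theta(k^2 m)$ ranges (one per edge of $G$ and pair of indices), so $|\R|=\Theta(kn+k^2m+2^k)$. The $\bigO(k2^k n^2)$ bound comes from summing the range cardinalities (dominated by the $\R_3$ contribution of $\bigO(k2^k n)$ and the $\R_2$ contribution of $\bigO(k^2 m)$) together with the $\bigO(k^2n^2)$ cost of adding the clique on $X$; your accounting based on ``$\bigO(2^k)$ ranges, each scanned in $\bigO(n^2)$'' is not how the bound arises.
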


\section{Simple bounds}
\label{sec:bounds}

The following are upper bounds on the size of shattered subsets in graphs, with respect to various graph parameters.
We emphasize on Lemma~\ref{lem:base}, which is a cornerstone of our practical algorithm for computing the VC-dimension of graphs (presented in the next Section~\ref{sec:algorithm}).

\begin{lemma}\label{lem:base}
  Consider a shattered set $X$ and $Y\subseteq X$.
  Let $Y'$ be the set of vertices with trace $Y$ on $X$.
  Any shattered set $Z$ containing $X$ satisfies $2^{|Z|-|X|}\le |Y'|$.
\end{lemma}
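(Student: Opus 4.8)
The plan is to bound $|Y'|$ from below by exhibiting at least $2^{|Z|-|X|}$ distinct vertices, each having trace $Y$ on $X$. First I would fix a shattered set $Z$ with $X \subseteq Z$, and restrict attention to those subsets $W \subseteq Z$ whose restriction to $X$ is exactly $Y$, that is, $W \cap X = Y$. Such a $W$ is determined by prescribing $W \cap X = Y$ together with an arbitrary subset of $Z \setminus X$, so there are precisely $2^{|Z \setminus X|} = 2^{|Z|-|X|}$ of them.

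Next, for each such $W$ I would invoke the shattering of $Z$: there exists a vertex $v_W$ whose trace on $Z$ equals $W$, i.e.\ $N[v_W] \cap Z = W$. The key observation is that the trace of $v_W$ on the smaller set $X$ is then $N[v_W] \cap X = (N[v_W] \cap Z) \cap X = W \cap X = Y$, where the first equality uses $X \subseteq Z$. Hence every such $v_W$ belongs to $Y'$. To conclude I would argue that the map $W \mapsto v_W$ is injective: if two distinct subsets $W_1 \neq W_2$ of $Z$ produced the same vertex $v$, then $v$ would have two different traces $W_1 = N[v] \cap Z = W_2$ on $Z$, a contradiction. Thus the $2^{|Z|-|X|}$ chosen subsets inject into $Y'$, giving $2^{|Z|-|X|} \le |Y'|$.

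There is no genuinely hard step here; the whole argument is a counting-plus-injection argument exploiting the definition of shattering applied to $Z$ rather than to $X$. The only point demanding care is the restriction identity $N[v_W] \cap X = W \cap X$, which crucially relies on $X \subseteq Z$ — this is precisely where the hypothesis that $Z$ \emph{contains} $X$ enters, and it is what lets every trace on $Z$ collapse to the single prescribed trace $Y$ on $X$.
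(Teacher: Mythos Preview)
Your proof is correct and is essentially identical to the paper's: the paper parameterizes the relevant subsets by $X' \subseteq Z \setminus X$ (setting the trace to $Y \cup X'$), while you parameterize directly by $W \subseteq Z$ with $W \cap X = Y$, but this is the same bijection and the same injection-into-$Y'$ argument.
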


\begin{proof}
  For any subset $X'\subseteq Z\setminus X$, there must exist a vertex $v_{X'}$ with trace $N[v_{X'}]\cap Z=Y\cup X'$.  As $v_{X'}$ has trace $Y$ on $X$, it is included in $Y'$. All vertices $v_{X'}$ are pairwise distinct since they have pairwise distinct traces on $Z$. Hence, $Y'$ has size at least $2^{|Z\setminus X|}$.
\end{proof}

\noindent
Setting $X=Y=\set{x}$ for any vertex $x$, we have $Y'=N[x]$ and obtain the following bound.

\begin{corollary}\label{cor:degbound}
Any shattered set $Z$ containing a vertex $x$ has size at most $\floor{\log \deg(x)} + 1$. 
\end{corollary}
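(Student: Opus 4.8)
The plan is to derive this corollary as a direct specialization of Lemma~\ref{lem:base}. The corollary concerns a shattered set $Z$ containing a single vertex $x$, so the natural move is to instantiate the lemma with the smallest possible ``anchor'' set, namely $X = \set{x}$, and to pick the subset $Y \subseteq X$ that makes the bound tightest while keeping the analysis trivial.

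First I would set $X = Y = \set{x}$, exactly as the remark preceding the corollary suggests. With this choice, $Y'$ is the set of vertices whose trace on $X = \set{x}$ equals $Y = \set{x}$; that is, the vertices $v$ with $N[v] \cap \set{x} = \set{x}$, which means precisely those $v$ adjacent to $x$ or equal to $x$. By the definition of closed neighborhood, this set is exactly $N[x]$, so $|Y'| = |N[x]| = \deg(x)$ under the paper's convention that $\deg(x) = |N[x]|$.

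Next I would apply Lemma~\ref{lem:base} to the shattered set $Z \supseteq X = \set{x}$. The lemma gives $2^{|Z| - |X|} \le |Y'|$, which becomes $2^{|Z| - 1} \le \deg(x)$. Taking base-$2$ logarithms yields $|Z| - 1 \le \log \deg(x)$, and since $|Z|$ is an integer, I can replace the right-hand side by its floor to get $|Z| - 1 \le \floor{\log \deg(x)}$, hence $|Z| \le \floor{\log \deg(x)} + 1$, which is the claimed bound.

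There is essentially no obstacle here, since the corollary is a one-line consequence of the lemma. The only point requiring a small amount of care is the passage from the exponential inequality to the floor: one must verify that an integer $|Z|$ satisfying $|Z| - 1 \le \log \deg(x)$ indeed satisfies $|Z| - 1 \le \floor{\log \deg(x)}$, which holds because the floor is the largest integer not exceeding $\log\deg(x)$ and $|Z|-1$ is an integer bounded above by $\log\deg(x)$. I would also note explicitly that the singleton $X = \set{x}$ is trivially shattered, so the hypothesis of Lemma~\ref{lem:base} that $X$ be a shattered set contained in $Z$ is satisfied, making the instantiation legitimate.
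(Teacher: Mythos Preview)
Your proposal is correct and follows exactly the paper's approach: the paper simply notes that setting $X=Y=\set{x}$ gives $Y'=N[x]$, and the bound follows from Lemma~\ref{lem:base}. Your added remarks about the floor and the trivial shatteredness of $\set{x}$ are valid elaborations of this one-line derivation.
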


In Appendix~\ref{sec:more-simple-bounds}, we additionally relate the VC-dimension of a graph with its degeneracy $k$ and its matching number $\nu$ (recall that a graph $G$ is called $k$-degenerate if every subgraph of $G$ contains a vertex with at most $k$ neighbours, and that $\nu$ is the  size of a maximum matching in $G$).
We summarize the obtained upper-bounds in the next lemma.
\begin{lemma}\label{lem:upper-bounds}
    Let $G$ be a non-empty graph on $n$ vertices with maximum degree $\Delta$, matching number $\nu$, and degeneracy $k$, then 
    $$
        \vcdim(G) 
        \leq 
        \min \left\{ 
        \lfloor \log n \rfloor, \lfloor \log \Delta \rfloor + 1, k+1, \nu +1
        \right\}.
    $$
\end{lemma}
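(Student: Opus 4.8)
```latex
The plan is to establish each of the four upper bounds on $\vcdim(G)$ separately, and then combine them by taking the minimum. Three of the four bounds---those involving $n$, $\Delta$, and $k$---follow quickly from material already available in the excerpt, while the matching-number bound will require the additional argument deferred to the appendix. Let $Z$ denote a largest shattered set, so that $|Z| = \vcdim(G)$; every bound amounts to an upper estimate on $|Z|$.

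First I would dispatch the bound $\vcdim(G) \leq \lfloor \log n \rfloor$. Since $Z \subseteq V$ is shattered, each of the $2^{|Z|}$ subsets of $Z$ must arise as a distinct trace $N[v]\cap Z$, and these traces are produced by distinct vertices of $G$; hence $2^{|Z|} \leq n$, giving $|Z| \leq \lfloor \log n \rfloor$. This is the standard counting argument and requires no new ideas. For the degree bound $\vcdim(G) \leq \lfloor \log \Delta \rfloor + 1$, I would simply pick any vertex $x \in Z$ and apply Corollary~\ref{cor:degbound}, which states that any shattered set containing $x$ has size at most $\lfloor \log \deg(x)\rfloor + 1 \leq \lfloor \log \Delta \rfloor + 1$; here I am using $\deg(v) = |N[v]|$ in the paper's closed-neighborhood convention, so that $\deg(x) \leq \Delta$ holds by definition of maximum degree. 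The degeneracy bound $\vcdim(G) \leq k+1$ would follow from a short argument I expect to spell out in the appendix: in a $k$-degenerate graph, one extracts a vertex of small closed degree within the subgraph induced by $Z$, and the restricted neighborhood set system on such a vertex cannot shatter more than $k+1$ points, because a vertex with few neighbors inside $Z$ can only realize few distinct traces. The same counting-against-degree philosophy as in Corollary~\ref{cor:degbound} drives this step, adapted to degeneracy.

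The matching-number bound $\vcdim(G) \leq \nu + 1$ is the step I expect to be the main obstacle, and it is the one I would relegate to the appendix rather than prove inline. The intuition is that a large shattered set forces the existence of many vertices with rich, pairwise-distinct intersection patterns on $Z$, and these patterns can be leveraged to build a large matching: distinct traces correspond to distinct neighborhoods, which can be used to select a set of vertex-disjoint edges of size roughly $|Z|-1$. Making this precise---choosing the right traces to pair up and verifying that the resulting edges can be made vertex-disjoint---is the delicate part, and it is where the insights on shattered subsets referenced from~\cite{ducoffe2022diameter} would come into play. The careful bookkeeping needed to guarantee disjointness, together with accounting for the additive constant, is what pushes this beyond a one-line argument.

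Finally, having established
$$
\vcdim(G) \leq \lfloor \log n \rfloor, \quad
\vcdim(G) \leq \lfloor \log \Delta \rfloor + 1, \quad
\vcdim(G) \leq k+1, \quad
\vcdim(G) \leq \nu + 1,
$$
the stated inequality follows immediately by taking the minimum of the four right-hand sides, since $\vcdim(G)$ is simultaneously bounded by each. The only hypothesis to keep track of is the non-emptiness of $G$, which ensures all four quantities are well-defined (in particular that $\log n$ and $\log \Delta$ make sense and that a largest shattered set exists).
```
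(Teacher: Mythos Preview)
Your handling of the $\lfloor \log n\rfloor$ and $\lfloor\log\Delta\rfloor+1$ bounds is fine and matches the paper: the first is the standard trace-counting argument, the second is exactly Corollary~\ref{cor:degbound}.

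The degeneracy sketch, however, has a genuine gap. You propose to find a vertex $x$ of small degree in the subgraph induced by the shattered set $Z$ itself and argue that ``a vertex with few neighbors inside $Z$ can only realize few distinct traces.'' That reasoning does not go through: the $2^{|Z|-1}$ witnesses $v_Y$ with $x\in Y=N[v_Y]\cap Z$ need not lie in $Z$, so $x$ having few neighbours \emph{within} $Z$ puts no constraint on $|Z|$. The paper's Lemma~\ref{lem:degeneracy} applies degeneracy to the larger set $X\cup\{v_Y:Y\subseteq X\}$ (shattered set together with one witness per subset), peels off the low-degree witnesses, and only then obtains a vertex $x\in X$ whose neighbours in this enlarged set can be counted and compared to the required $2^{|X|-1}-1$. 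Without enlarging the set to include the witnesses, the degeneracy hypothesis gives you nothing usable.

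For the matching bound your intuition (``build vertex-disjoint edges out of distinct traces'') is in the right direction, but the pointer to~\cite{ducoffe2022diameter} is misplaced: that reference is used in the $W[1]$-hardness argument, not here. The paper's Lemma~\ref{lem:matching} is more concrete than your sketch suggests: for each $x\in X$ it uses the \emph{singleton} witness $v_x$ with $N[v_x]\cap X=\{x\}$; whenever $x$ has a neighbour in $X$ one has $v_x\notin X$ and the edge $\{x,v_x\}$ is added; the leftover vertices of $X$ form a stable set, which is then covered (all but one) using a witness for the whole stable set and recursing. No external insight is needed---just the right choice of traces.
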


The following lemma allows to restrict the search of a shattered set containing a given node $x$ to its ball $B[x,2]$ of radius 2.
 \begin{lemma}\label{lem:ball-restriction}
For any shattered set $X$ and $x \in X$, we have $X \subseteq B[x,2]$.
\end{lemma}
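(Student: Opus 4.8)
The plan is to show that every element of a shattered set $X$ lies within distance $2$ of $x$, by considering the trace on $X$ of the singleton $\set{x}$. First I would observe that since $X$ is shattered, for the subset $\set{x} \subseteq X$ there exists a vertex $v$ whose trace satisfies $N[v] \cap X = \set{x}$. In particular $x \in N[v]$, so $v$ is adjacent to $x$ (or equals $x$), meaning $v \in B[x,1]$.

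Next I would take an arbitrary $y \in X$ and aim to show $y \in B[x,2]$. Since $X$ is shattered, for the subset $\set{x,y} \subseteq X$ there exists a vertex $w$ realizing the trace $N[w] \cap X = \set{x,y}$. Hence $w$ is adjacent to (or equal to) both $x$ and $y$. The idea is then to chain these adjacencies: $y$ is within distance $1$ of $w$, and $w$ is within distance $1$ of $x$, so by the triangle inequality for graph distance, $y$ is within distance $2$ of $x$. This places $y \in B[x,2]$, and since $y$ was arbitrary we conclude $X \subseteq B[x,2]$.

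The main thing to be careful about is the bookkeeping around the closed-neighborhood convention, namely that $N[w] \cap X = \set{x,y}$ allows $w \in \set{x,y}$ itself, which should be handled gracefully (if $w = x$ then $y \in N[x]$ gives distance $\le 1$, and if $w = y$ then $x \in N[y]$ likewise, while if $w$ is a third vertex the distance-$2$ bound follows). I do not anticipate a genuine obstacle here; the argument is short and the only subtlety is ensuring the distance bound is stated correctly under the paper's convention that $B[x,r]$ is the set of vertices at distance at most $r$ from $x$. The case $y = x$ is trivial since $x \in B[x,2]$ by definition.
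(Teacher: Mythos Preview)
Your proposal is correct and takes essentially the same approach as the paper: use the shatteredness of $X$ to find a vertex $w$ with $N[w]\cap X=\{x,y\}$, making $w$ a common (closed) neighbor of $x$ and $y$, hence $\mathrm{dist}(x,y)\le 2$. Your first paragraph about the singleton trace $\{x\}$ is unnecessary---you never use the vertex $v$ afterward---so you can drop it and proceed directly with the pair $\{x,y\}$ as the paper does.
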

\begin{proof}
    Since $X$ is shattered, for any vertex $y \in X \setminus \{x\}$, there exists a vertex $v \in V$ such that $N[v] \cap X = \{x,y\}$. That is, $v$ is a common neighbor of $x$ and $y$ and so $\mathrm{dist}_G(x,y) \leq 2$.
\end{proof}

\section{Algorithm}
\label{sec:algorithm}

The exact algorithm is given in Algorithm~\ref{alg:exact}. Apart from the graph, it receives a lower bound $lb$ on its VC-dimension. Alternatively, one can start the algorithm with input value $lb = 0$. In Section~\ref{sec:LB}, we describe a way to obtain a better starting value for $lb$.


\subsection{Outline of the method}
Given a graph $G$ and a lower bound $lb$ of its VC-dimension, our algorithm consists in exploring all possible shattered sets of size at least $lb+1$ according to Algorithm~\ref{alg:exact}. If one is found, $lb$ is updated and the search is continued on larger sets. When no shattered set of size $lb+1$ is found, we conclude that $lb$ is equal to the VC-dimension. The most technical part---checking for shattered supersets---is contained in the function \nameref{ShatAlgo} (see Algorithm~\ref{subalgo:explore}). We encode a subset $Y\subset X=\set{x_1,\ldots,x_k}$ by the integer with binary representation $y = y_{k}\cdots y_1$ where $y_i=1$ if $x_i\in Y$ and $y_i=0$ otherwise. The trace $N[v]\cap X$ of each vertex $v$ is therefore stored in a mask $M[v]$ which is updated as we visit subsets $X$ of $H$: the $i$th bit of $M[v]$ indicates whether the $i$th vertex of $X$ is in $N[v]$.

\begin{algorithm}[ht]
\algotitle{\textsc{VCdimComputation}}{VCAlgo}
\caption{  \textsc{VCdimComputation}$\big( G, n, lb\big)$}  \label{alg:exact} \small
  \Input{A graph $G=(V,E)$ with $n=|V|$ vertices, lower bound $lb$ on $\vcdim(G)$}
  \Output{The VC-dimension of $G$}
  \smallskip
  Let $H$ be an array containing all vertices of degree at least $2^{lb}$\\ 
  Sort $H$ (optional). \tcp*{We consider 3 different ways of sorting, see Section~\ref{sec:optimization}} 
   Initialize a mask $M[v]\leftarrow 0$ for all $v\in H$. \tcp*{Trace of $N[v]$ on $X=\emptyset$}
    Initialize $T = [T[0]]\leftarrow [n]$ \\
     \tcp{$T[y]$ counts the number of vertices $v\in V$ with trace $M[v]=y$ on $X =\emptyset$}
    \smallskip
  \For{$i = 1$ to $|H|$ }{
           
    $lb \leftarrow $ \nameref{ShatAlgo}$(H,i,\emptyset,T,lb)$\label{lin:forH}
  }
  \smallskip
  \Return $lb$ 
  
\end{algorithm}

\begin{algorithm}[ht]
\algotitle{\textsc{ExploreShattered}}{ShatAlgo}
\caption{  \textsc{ExploreShattered}$\big( H,i,X,T,lb\big)$} \small
    \label{subalgo:explore}

    
      Set $x\leftarrow H[i]$, $s\leftarrow  |X \cup \{x\}|$, and $m\leftarrow 2^{s-1}$. \tcp*{$m$ is the bit mask for $x$}

        \smallskip 
        
        $T' \leftarrow \nameref{traceAlgo} \big(T,x,m \big)$ 
        \tcp*{$T'[y]  =$ \# vertices with trace $y$ on $X \cup \{x\}$ }



        \smallskip
      
      $prune\leftarrow \texttt{False}$\\
      \For{$y=0$ to $2^{s}-1$}{
        \lIf{$T'[y] < 2^{lb+1-s}$}{$prune\leftarrow \texttt{True}$. \tcp*[f]{by Lemma~\ref{lem:base}}}
      }

    \smallskip
      
      \If{\emph{\textbf{not}} $prune$}{
      \For{$v\in N_G[x]$}{ 
        $M[v] \leftarrow M[v] + m$\tcp*{Update $M(v)$ to be the trace of $N[v]$ on $X\cup\set{x}$}}
        \lIf{$s>lb$}{$lb\leftarrow s$ \tcp*[f]{$X\cup\set{x}$ is shattered}}
        \For{$j=i+1$ to $|H|$}{\label{lin:forHexplore}
        $lb \leftarrow \nameref{ShatAlgo}(H,j,X\cup\set{x},T',lb)$\\  
        }
         \lFor{$v\in N_G[x]$}{$M[v] \leftarrow M[v] - m$ \tcp*[f]{Restore the trace of $N[v]$ on $X$}}
      }
      \smallskip
    \Return $lb$

    \bigskip

    \Function{\textsc{TraceCountAdd}$\big(T,x,m \big)$ \algotitle{\textsc{TraceCountAdd}}{traceAlgo}}{

      Let $T'$ be a copy of $T$ resized to $2^{s}$ and padded with zeros. 

      \For{$v\in N_G[x]$}{ 
        Consider  $y\leftarrow M[v]$ \tcp*{The trace of $N[v]$ on $X$}
        $T'[y]\leftarrow T'[y]-1$\\
        
        $T'[y+m]\leftarrow T'[y] + 1$\\
      }
    
    \Return $T'$
  }
  
\end{algorithm}




        
    

\noindent
To make the  algorithm  more efficient, we incorporated the following key ideas:
\begin{itemize}
    \item According to Corollary~\ref{cor:degbound}, we can restrict the search to the set $H$ of vertices with degree $2^{lb}$ at least. 
    \item We fix an ordering $\prec$ of $H$ and scan subsets of $H$ in a depth first search manner.\\
    More precisely, \nameref{ShatAlgo} performs a DFS of the inclusion graph of subsets of $H$ by following arcs $X\rightarrow Z$ for $X,Z\subseteq H$ such that $Z=X\cup\set{z}$ and $x\prec z$ for all $x\in X$. Starting from the empty set, any set $X=\set{x_1,\ldots,x_k}$ with $x_1\prec\cdots\prec x_k$ is thus reachable through $\emptyset\rightarrow \set{x_1}\rightarrow \set{x_1,x_2}\rightarrow\cdots\rightarrow X$.
    \item For each visited set $X$, we compute a table $T$ counting for each $Y\subseteq X$ the number of vertices $v$ with trace $N[v]\cap X=Y$. If $T[Y]<2^{lb+1-|X|}$ for some $Y$, then Lemma~\ref{lem:base} implies that there exists no shattered set $Z\supseteq X$ of size $lb+1$ or more, so we do not explore the supersets of $X$. Note that this test is  not satisfied when $X$ is not shattered as we then have $T[Y]=0$ for some $Y\subseteq X$. The argument $X$ of \nameref{ShatAlgo} is thus always a shattered set.
    \item 
    When considering $Z=X\cup\set{x}$, the table $T'$ for $Z$ can be obtained from $T$ in time $\bigO(|T|+\Delta)=\bigO(2^d+\Delta)$ where $\Delta$ is the maximum degree in $G$ and $d$ is its VC-dimension. 
\end{itemize}

 The overall worst case complexity of the algorithm is thus $\bigO(n^d(2^d+\Delta))$ as we visit only shattered sets. Moreover, Corollary~\ref{cor:degbound} implies $2^d=\bigO(\Delta)$ and the complexity is thus $\bigO(n^d\Delta)$. This pessimistic bound assumes that a constant fraction of sets of size $d$ are shattered. Empirically we observed much better running times, as reported in Section~\ref{sec:experiments} with the evaluation on practical graphs.
Section~\ref{sec:optimization} describes several optimizations, including how to sort $H$.

\subsection{Lower-bound computation}\label{sec:LB}

We compute a lower bound $lb$ by a method similar to \nameref{VCAlgo} with $lb = 0$ (thus $H=V$). We make the search faster by only performing a partial scan of subsets of $H$. In order to find large shattered sets, we sort $H$ by non-increasing degrees. We restrict the search in two ways: we fix a maximum number $maxvisits=64$ of times a vertex $x$ can be added to the current shattered set; we also restrict the for loop of Line~\ref{lin:forHexplore} of \nameref{ShatAlgo} to the first $maxvisits/2$ elements. 
As each vertex is visited a constant number of times (at most $maxvisits$), this modified version takes linear time.

\subsection{Optimizations}
\label{sec:optimization}


\paragraph*{Vertex ordering.} 
We have considered the following options for sorting the set $H$ of high degree vertices:
\begin{itemize}
  \item non-increasing degrees ($D^-$),
  \item non-decreasing degrees ($D^+$),
  \item $k$-core ordering  ($K$): an ordering of $G_{|H}$, the subgraph restricted to $H$, obtained by repeatedly removing a vertex with lowest degree, vertices removed first are ordered first,
  \item random ordering ($R$).
\end{itemize}
The intuition for choosing non-increasing degrees follows that of the lower bound computation: higher degree vertices tend to participate to larger shattered set and exploring them first can improve the lower bound earlier, allowing to restrict the rest of the search more severely thanks to Lemma~\ref{lem:base}. Conversely, if we start with a good enough lower bound, lower degree vertices tend to participate to smaller shattered sets and the exploration from these vertices tends to be faster. Exploring them first then speeds up the exploration from high degree vertices as we do not need to consider adding already scanned vertices anymore. Using a $k$-core ordering follows a similar intuition with the refinement of taking into account the degree after removing previous vertices rather than the degree in the full graph.
Using a random ordering seems a basic choice for comparison.

\paragraph*{Ball restriction.} 
Lemma~\ref{lem:ball-restriction} implies that when starting an exploration from $x = H[i]$ (Line~\ref{lin:forH} of \nameref{VCAlgo}) we can restrict the search to consider only vertices in $ B[x,2]\cap\set{H[i],H[i+1],\ldots}$.

\paragraph*{Graph reduction.}
As we focus on shattered sets included in $H$, we can restrict the graph while preserving all possible traces on $H$. For that purpose, for each possible trace $Y\subseteq H$ which can be obtained as $H\cap N[v]$ for $v\in V$, we keep at most one vertex $v\in V\setminus H$ with trace $N[v]\cap H=Y$. Such a selection can be obtained in linear time using partition refinement~\cite{PT87,HMPV00} as follows. Starting from the partition $\P=\set{V}$ we iteratively refine it by sets $N[x]$ for $x\in H$: each refinement step consists in splitting each part $P\in \P$ into $P\cap N[X]$ and $P\setminus N[X]$ (if one of the two sets is empty, $P$ remains unchanged). Each refinement step clearly maintains the invariant that all vertices in a part have the same trace on the set of vertices of $H$ processed so far. At the end of the process, all vertices in a part must have the same trace on $H$. We thus keep one vertex per final part not intersecting $H$ and all vertices in $H$ to obtain a set $V'\supseteq H$ of vertices providing the same traces on $H$ as $V$ and proceed on $G_{|V'}$ instead of $G$.

\begin{table}[t]
  \caption{The graphs we use with their main parameters and the time (in seconds) required by our reference implementation $KBG$ to compute their VC-dimension.}
  \label{tab:ref}
\centering
\setlength{\tabcolsep}{4pt}
\begin{tabular}{|l|rrrrr|}
\hline
\textbf{Graph} 
& \textbf{\#nodes} & \textbf{\#edges} & \textbf{max.deg.}
& \textbf{VC-dim $d$} & \textbf{time (s)}  \\
\hline
\hline
\ExpandableInput{tbls/table1.tex}
\hline
\end{tabular} 
\end{table}

\section{Experiments}
\label{sec:experiments}

\subsection{Dataset}

We evaluate the performance of our algorithm on various types of practical graphs. We use
graphs from the BioGRID interaction database (BIO-*)~\cite{BIOGRID};
a protein interactions network (dip20170205)~\cite{DIP}; 
and graphs of the autonomous systems from the Internet (oregon2, CAIDA\_as and DIMES)~\cite{P2P,CAIDA,DIMES}.
We also test 
computer networks (Gnutella, Skitter), 
web graphs (notreDame and BerkStan), 
social networks (Epinions, Facebook, Twitter),
co-author graphs (ca-HepPh, dblp),
road networks (t.CAL, t.FLA), 
a 3D triangular mesh (buddha), 
a graph from a computer game (FrozenSea),
and grid-like graphs from VLSI applications (z-alue7065).
The data is available from \url{snap.stanford.edu}, \url{webgraph.di.unimi.it}, \url{www.dis.uniroma1.it/challenge9}, \url{graphics.stanford.edu}, \url{steinlib.zib.de}, and \url{movingai.com}. 
Furthermore, we use synthetic inputs: grid300-10 and grid500-10 are square grids with respective sizes $301\times 301$, and $501\times 501$ where 10\% of the edges were randomly deleted; and powerlaw2.5 is a random graph generated according to the configuration model with a degree sequence following a power law with exponent 2.5.

All experiments were performed on a cluster of 20 nodes equipped with two Cacade Lake
Intel Xeon 5218 16 cores processors at 2.4GHz and 192GB of memory each. Sixteen processes were run in parallel on four nodes of the cluster. Our computation times are thus pessimistic compared to running each process on a fully reserved node (for example, the computation of the VC-dimension of the twitter graph of our dataset takes 431s on a fully reserved node compared to 456s in our experiment even though we report user times).
The code is available at \href{https://gitlab.inria.fr/viennot/graph-vcdim}{gitlab.inria.fr/viennot/graph-vcdim}.

\subsection{Graphs and reference time}

We first present our dataset graphs in Table~\ref{tab:ref} together with their VC-dimension and time required to compute it with our reference implementation $KBG$: which uses $k$-core ordering ($K$), ball restriction ($B$) and graph reduction ($G$). We have chosen $KBG$ as our reference choice of optimizations as it provides the minimum sum of running times over all graphs of our dataset. We list also their size in terms of number of nodes (not counting isolated nodes) and number of edges (duplicate edges are removed).

We observe that the VC-dimension of all graphs in the dataset is rather small: at most 8, even for graphs with millions of nodes and over 10M edges (M stands for million). Moreover, its computation with our $KBG$ implementation takes at most a few seconds for most of the graphs, and less than a few minutes for all but one: \graph{as-skitter} for which it takes around 35 minutes. Memory usage (not reported here) grows with the input graph size up to roughly 600 megabytes for as-skitter. Not surprisingly, this most difficult graph is both the largest in terms of number of edges and the most complex in terms of VC-dimension.
We analyze the dependency of the computation time with respect to some graph parameters in Section~\ref{sec:analysis}.


\begin{figure}[t]
    \centering
    \includegraphics[width=.95\textwidth]{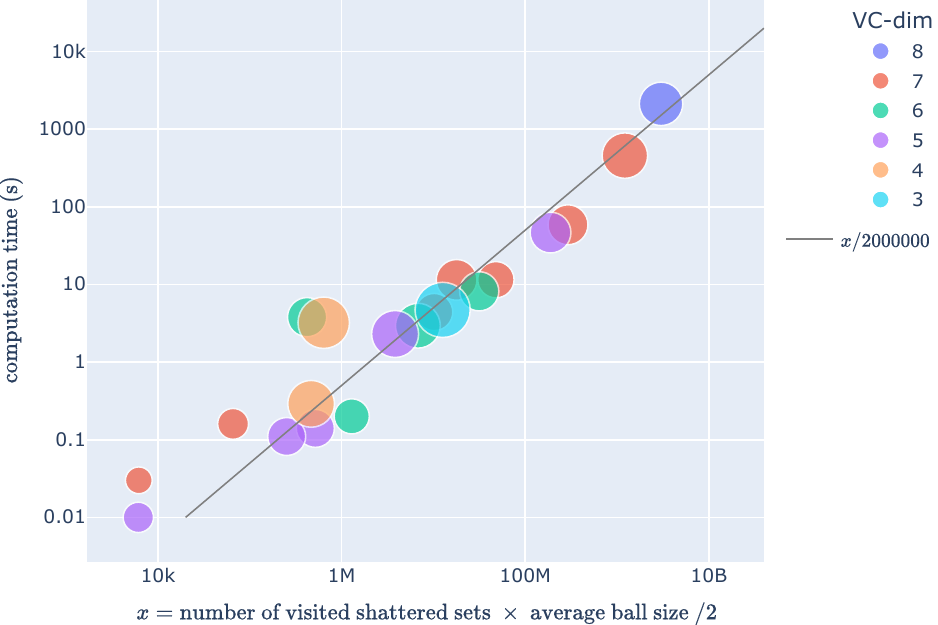}\\
    \caption{Computation time $t$ in seconds versus the estimated number $x$ of tentative shattered sets considered by $KBG$: each network in the dataset is represented by a disk with coordinates $(x,t)$, whose color indicates the VC-dimension $d$ of the network, while the size is proportional to the logarithm of the number of high degree nodes.}
    \label{fig:nshat-compl}
\end{figure}

To appreciate these running times, recall that our algorithm for computing the VC-dimension consists in first computing a lower-bound $lb$, then identifying the set $H$ of \emph{high degree nodes}, that is those with degree at least $2^{lb}$, and then ordering this set for exploring shattered sets included in $H$. 
Figure~\ref{fig:nshat-compl} shows that running times are basically proportional to the number $x$ of shattered sets considered by $KBG$, which is estimated as follows. For each visited shattered set $X=\set{v_1,\ldots,v_k}$ with $v_1\prec\cdots\prec v_k$, our algorithm tries to add high degree nodes of $B[v_1,2]$ coming after $v_k$ in the ordering $\prec$ used for $H$. We can thus estimate $x$ as the product of the number $s$ of visited shattered sets multiplied by half of the average ball size $bsize=\frac{1}{|H|}\sum_{v\in H}\card{B[v,2]\cap H}$. These numbers are reported in Table~\ref{tab:nshat}. We see in Figure~\ref{fig:nshat-compl} that most networks are close to the black line that corresponds to a rate of 2 millions tentative shattered sets per second. For low values of $x$ five networks appear significantly above the line: \graph{p2p-gnutella09}, \graph{BIO-SYS-Aff-Cap-RNA-3.5}, \graph{DIMES-201204}, \graph{powerlaw2.5} and \graph{buddha} (from left to right). This is due to the overhead of reading the input file, computing the lower-bound, the $k$-core ordering, and the graph reduction which appears to be higher than the time for exploring shattered sets in these networks (we get comparable times when running all these phases and stopping  before exploring).

\begin{table}[t]
    \caption{VC-dimension $d$, counts of shattered sets and high degree nodes, average ball size.}
  \label{tab:nshat}
  \small
\centering
\resizebox{\linewidth}{!}{%
\setlength{\tabcolsep}{4pt}
\begin{tabular}{|l|rrrrrrrrr|}
\hline
\textbf{Graph} & $d$
& all & $\deg\ge 2^d$ & Lem.\ref{lem:base} & Lem.\ref{lem:base}G & \#$\deg\ge 2^d$
& $KBG$ & \#$\deg\ge 2^{lb}$  & $bsize$
\\
\hline
\hline
\ExpandableInput{tbls/nshat.tex}
\hline
\end{tabular}%
}
\end{table}

\subsection{Analysis: shattered sets and high degree nodes}
\label{sec:analysis}

The running time of our computation is mostly governed by the number of high degree nodes and the number of shattered sets explored, we thus present a detailed analysis of them. Table~\ref{tab:nshat} lists several measures we could perform as follows. By removing the pruning according to Lemma~\ref{lem:base}, our algorithm explores all shattered sets. By setting an initial lower-bound of 0, we first tried to compute all shattered sets and report their number in column ``all''. Note that this computation was not doable within our timeout of 6 hours for six graphs. 
Unsurprisingly, bigger values are observed for larger VC-dimension. 
We then tried to compute all shattered sets included in the set $H'$ of nodes with degree at least $2^d$ where $d$ is the VC-dimension of the graph. Again this computation was out of reach within our time limit for three graphs. We then use $KB$ with $d$ as lower-bound to obtain the number of visited shattered sets in $H'$ according to Lemma~\ref{lem:base} in column ``Lem.\ref{lem:base}''. We obtain similarly column ``Lem.\ref{lem:base}G'' using $KBG$. Column ``KBG'' is obtained using $KBG$ with its heuristic lower-bound (instead of the exact value $d$). We also report the size of $H'$ and $H$ (columns ``\#$\deg\ge 2^d$'' and ``\#$\deg\ge 2^{lb}$'' respectively), and average ball size $bsize$ in the last column.

\begin{figure}[t]
    \centering
    \includegraphics[width=\textwidth]{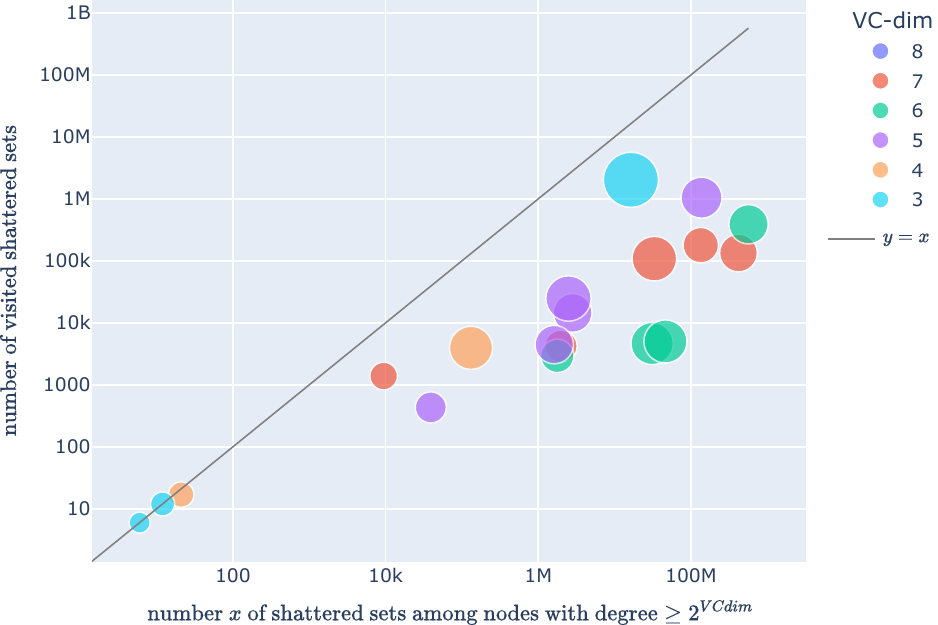} 
    \caption{The number $y$ of visited shattered sets versus the number $x$ of shattered sets in $H'$: each network in the dataset is represented by a disk with coordinates $(x,y)$, whose color indicates the VC-dimension $d$ of the network, while its size is proportional to the logarithm of $|H'|$ ($H'$ denotes the set of nodes with degree $2^d$ at least).}
    \label{fig:nshat-lem1}
\end{figure}

We observe that restricting to $H'$ can reduce the number of shattered sets by a huge factor (e.g. \graph{BIO-SYS-Aff-Cap-RNA-3.5}). 
Note that $H'$ can be empty in graphs with low maximum degree such as grids, explaining the zeros in the table. In general, Lemma~\ref{lem:base} allows to further reduce the number {$y$} of shattered sets to explore as illustrated by
Figure~\ref{fig:nshat-lem1} (column ``$\deg\ge 2^d$'' versus ``Lem.\ref{lem:base}'' of Table~\ref{tab:nshat}).
For networks with more than 10k shattered sets in $H'$, we observe a reduction factor varying from roughly 10 for networks with VC-dimension 3-4, to roughly 100 for networks with VC-dimension 5, and from 1000 to 10k for networks with VC-dimension 6-7 except for \graph{BIO-SYS-Aff-Cap-RNA-3.5} for which the number {$x$} of shattered sets in $H'$ was already quite low. We could not compute the value $x$ for our only network of VC-dimension 8 within our 6 hours limit. Overall, this shows the efficiency of our approach by restricting to high degree nodes and pruning the search by Lemma~\ref{lem:base}.


\subsection{Lower and upper bounds}

As detailed in Appendix~\ref{sec:lb}, the lower bounds we obtain with our heuristic are mostly equal to the true VC-dimension  or just one less. The  upper bounds presented in Lemma~\ref{lem:upper-bounds}, except for the matching number, are analyzed in Appendix~\ref{sec:ub} and are often much larger than the true VC-dimension except for grid-like graphs.

\subsection{Optimizations}

We now compare our reference implementation with other variants of our algorithm obtained by changing the ordering of the vertices ($D^-$, $D^+$, $K$, $R$), using ball restriction ($B$) or not, and using graph restriction ($G$) or not (see Table~\ref{tab:opt}).

\begin{table}[t]
  \caption{Comparing different optimization options of our algorithm with our reference selection ($KBG$). A dash (--) indicates that the timeout of 6 hours was reached.}
  \label{tab:opt}
  \small
\centering
\resizebox{\linewidth}{!}{%
\setlength{\tabcolsep}{4pt}
\begin{tabular}{|l|rrrrrrr|}
\hline
\textbf{Graph} & $D^-BG$ & $D^+BG$ & $\bm{KBG}$ & $RBG$ 
& $KG$ & $KB$  & $K$ 
\\
\hline
\hline
\ExpandableInput{tbls/table3.tex}
\hline
\end{tabular}%
}
\end{table}

Concerning the ordering of the nodes used for scanning shattered sets, we first note that non-decreasing degrees ($D^+$) is almost always faster than non-increasing degrees ($D^-$). Notable exceptions are  \graph{notreDame} and \graph{twitter-combined} where our initial lower bounds 
are $\vcdim(\textrm{notreDame}) -2$ and $\vcdim(\textrm{twitter-combined}) -1$ respectively. 
By observing our traces of execution, we explain it by the fact that the non-increasing order allows to find faster a better lower-bound, which then speeds up the rest of the computation. When the starting lower-bound was indeed exact, non-decreasing degrees is always faster or at least very close to non-increasing degrees. Our intuition is that the number of tentative shattered sets inspected is lower in that case. Indeed, for shattered sets constructed from the first nodes of the ordering, we have to consider all possible remaining nodes (that are in their ball of radius two) and try to construct a tentative shattered set by adding each of them. Putting nodes with lower degree first seem to result in a better balance with respect to the number of tentative shattered sets tested for being shattered.
In that respect, the $k$-core ordering ($K$) seems to work slightly better since it considers the remaining degrees after removing the first nodes rather than degree in the full graph. We note however that a random ordering ($R$) gives overall good results and $RBG$ can even outperform $KBG$ when our lower-bound is not exact, similarly as $D^-BG$ can outperform $D^+BG$. This is in particular the case for \graph{twitter-combined}. Overall there is no clearly better strategy for the ordering and both $k$-core ordering and random ordering seem legitimate choices.

\begin{figure}[t]
    \centering
    \includegraphics[width=.8\textwidth]{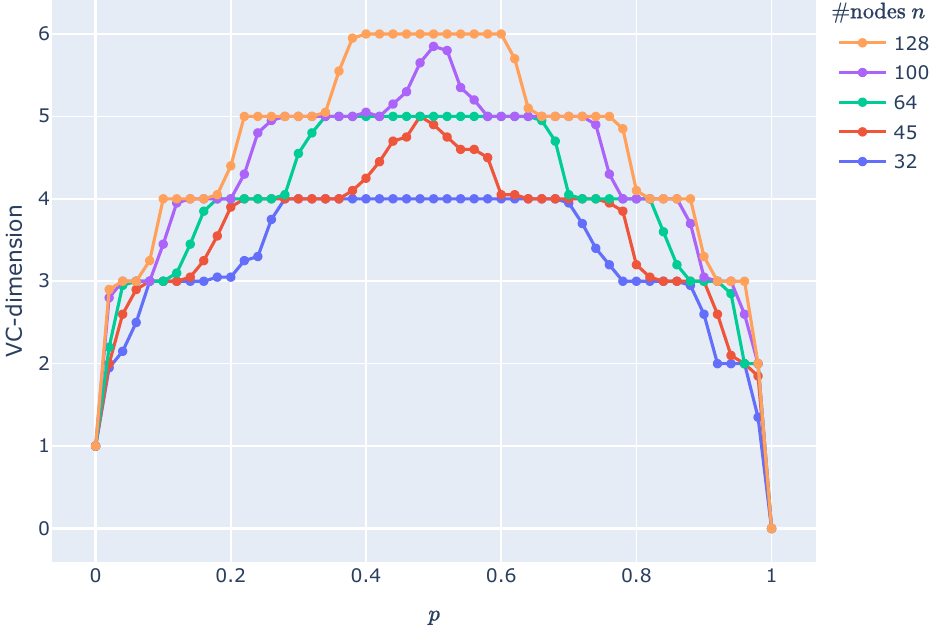}\\
    \includegraphics[width=.8\textwidth]{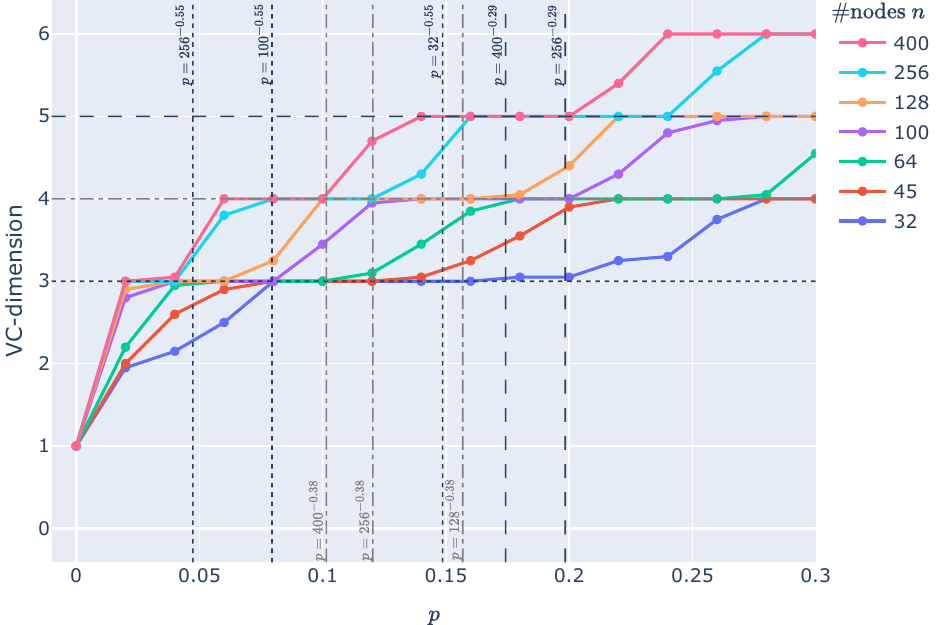}\\
    \caption{Top: the average VC-dimension of $G_{n,p}$ as a function of $p$ for $n=32,45,64,100, 128$. Bottom: a zoom on values $p\in[0,0.3]$ including additional curves for $n=256,400$.}
    \label{fig:erdos-renyi}
\end{figure}

Recall that the ball optimization ($B$) consists in restricting the nodes added to a shattered set $X$ to those that are in the ball of radius two centered at the first node of $X$. It also allows to reduce the number of tentative shattered sets tested and is almost always beneficial. It appears to be mandatory on graphs with very large number of high degree nodes such as \graph{froz} (the number of high degree nodes is analyzed in Section~\ref{sec:analysis}).

The graph reduction optimization ($G$) is not very costly in terms of computation time ($KBG$ is always almost as good as $KB$) and gives a significant improvement on difficult graphs such as \graph{as-skitter} and \graph{y-BerkStan}.

\subsection{VC-dimension of random graphs}

We computed the average VC-dimension of various Erdős-Rényi random graphs $G_{n,p}$ (where each edge appears independently with probability $p$) with up to $n=400$ nodes and compared them to
\cite{anthony1995vapnik} which
proved a threshold of $p=n^{-11/20}=n^{-0.55}$ above which the VC-dimension $d$ of $G_{n,p}$ tends to be greater than $3$, and similarly $p=n^{-21/55}\approx n^{-0.38}$ for $d>4$, and $p=n^{-7/24}\approx n^{-0.29}$ for $d>5$. On the one hand, our results confirm the $p=n^{-0.55}$ and $p=n^{-0.38}$ thresholds which appear to be rather sharp already for $n=100$ or $n=256$. On the other hand, observing the $p=n^{-0.29}$ threshold seems to require size $n$ greater than $400$ (see Figure~\ref{fig:erdos-renyi}).

\begin{figure}[t]
    \centering
    \includegraphics[width=.8\textwidth]{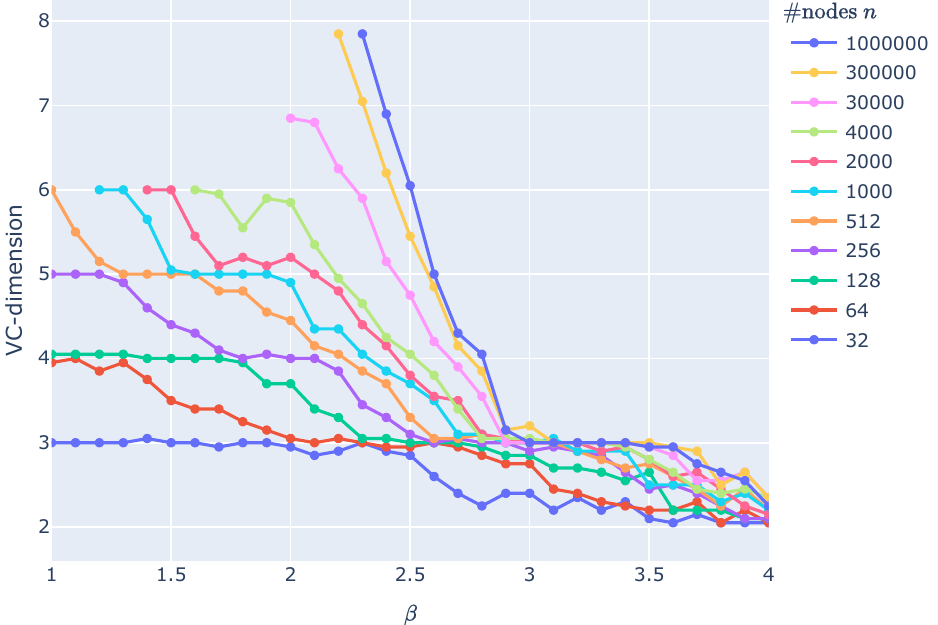} 
    \caption{Average VC-dimension of a power-law random graph with respect to the exponent $\beta$ of the law for various numbers $n$ of nodes. (Curves for large $n$ are truncated for low values of $\beta$.) 
    }
    \label{fig:power-law}
\end{figure}

Figure~\ref{fig:power-law} shows the VC-dimension of various power-law random graphs for  $n\in\{32,64,128,\linebreak 256,512,1\,000,2\,000,4\,000,30\,000,300\,000,1\,000\,000\}$ nodes and exponent $\beta\in [1,4]$.
A graph with $n$ nodes and exponent $\beta$ is obtained by generating a sequence of degrees following a power-law with parameter $\beta$ ($\deg(u)$ gets value $d$ with probability proportional to $1/d^\beta$) and generating a graph according to the configuration model (we generate $\deg(u)$ half edges for each node $u$ and connect half edges according to a random permutation). For each pair $n,\beta$, we have generated 20 random graphs and computed the average VC-dimension of them.
Our curves seem to indicate a threshold value near $\beta=3$ below which the VC-dimension tends to infinity as $n$ grows while the VC-dimension seems to remain constant above (at most three for large $n$).

\section{Conclusion}
\label{sec:conclusion}

A first conclusion is that there is room for improvement with respect to the quick estimation of the VC-dimension, especially concerning good upper bounds. 
Our work can be directly applied in the setups, where the hyperedges are induced by the open neighborhoods or where only the neighborhoods of a subset of vertices are considered as hyperedges.
In future works, besides studying the VC-dimension of power-law random graphs as mentioned above, we will consider extending our algorithm to the practical computation of other VC-dimension parameters on graphs ({\it e.g.}, the distance VC-dimension, see~\cite{chepoi2007covering}), and of the VC-dimension of arbitrary set systems where the ranges are given explicitly. 

\bibliography{biblio}

\begin{thebibliography}{10}

\bibitem{anthony1995vapnik}
Martin Anthony, Graham Brightwell, and Colin Cooper.
\newblock The {V}apnik-{C}hervonenkis dimension of a random graph.
\newblock {\em Discrete Mathematics}, 138(1-3):43--56, 1995.
\newblock \href {https://doi.org/10.1016/0012-365X(94)00187-N}
  {\path{doi:10.1016/0012-365X(94)00187-N}}.

\bibitem{blumer1989learnability}
Anselm Blumer, Andrzej Ehrenfeucht, David Haussler, and Manfred~K. Warmuth.
\newblock Learnability and the {V}apnik-{C}hervonenkis dimension.
\newblock {\em Journal of the ACM (JACM)}, 36(4):929--965, 1989.
\newblock \href {https://doi.org/10.1145/76359.76371}
  {\path{doi:10.1145/76359.76371}}.

\bibitem{bonamy2021eptas}
Marthe Bonamy, {\'E}douard Bonnet, Nicolas Bousquet, Pierre Charbit, Panos
  Giannopoulos, Eun~Jung Kim, Pawe{\l} Rz{{a}}{\.z}ewski, Florian Sikora, and
  St{\'e}phan Thomass{\'e}.
\newblock {EPTAS} and subexponential algorithm for maximum clique on disk and
  unit ball graphs.
\newblock {\em Journal of the ACM (JACM)}, 68(2):1--38, 2021.
\newblock \href {https://doi.org/10.1145/3433160} {\path{doi:10.1145/3433160}}.

\bibitem{bousquet2015identifying}
Nicolas Bousquet, Aur{\'e}lie Lagoutte, Zhentao Li, Aline Parreau, and
  St{\'e}phan Thomass{\'e}.
\newblock Identifying codes in hereditary classes of graphs and {VC}-dimension.
\newblock {\em SIAM Journal on Discrete Mathematics}, 29(4):2047--2064, 2015.
\newblock \href {https://doi.org/10.1137/14097879}
  {\path{doi:10.1137/14097879}}.

\bibitem{chalopin2023sample}
J{\'e}r{\'e}mie Chalopin, Victor Chepoi, Fionn Mc~Inerney, S{\'e}bastien Ratel,
  and Yann Vax{\`e}s.
\newblock Sample compression schemes for balls in graphs.
\newblock {\em SIAM Journal on Discrete Mathematics}, 37(4):2585--2616, 2023.

\bibitem{chalopin2022unlabeled}
J{\'e}r{\'e}mie Chalopin, Victor Chepoi, Shay Moran, and Manfred~K. Warmuth.
\newblock Unlabeled sample compression schemes and corner peelings for ample
  and maximum classes.
\newblock {\em Journal of Computer and System Sciences}, 127:1--28, 2022.
\newblock \href {https://doi.org/10.1016/j.jcss.2022.01.003}
  {\path{doi:10.1016/j.jcss.2022.01.003}}.

\bibitem{chazelle1989quasi}
Bernard Chazelle and Emo Welzl.
\newblock Quasi-optimal range searching in spaces of finite {VC}-dimension.
\newblock {\em Discrete \& Computational Geometry}, 4:467--489, 1989.
\newblock \href {https://doi.org/10.1007/BF02187743}
  {\path{doi:10.1007/BF02187743}}.

\bibitem{chepoi2007covering}
Victor Chepoi, Bertrand Estellon, and Yann Vaxes.
\newblock Covering planar graphs with a fixed number of balls.
\newblock {\em Discrete \& Computational Geometry}, 37:237--244, 2007.
\newblock \href {https://doi.org/10.1007/s00454-006-1260-0}
  {\path{doi:10.1007/s00454-006-1260-0}}.

\bibitem{chepoi2020density}
Victor Chepoi, Arnaud Labourel, and S{\'e}bastien Ratel.
\newblock On density of subgraphs of {C}artesian products.
\newblock {\em Journal of Graph Theory}, 93(1):64--87, 2020.
\newblock \href {https://doi.org/10.1002/jgt.22469}
  {\path{doi:10.1002/jgt.22469}}.

\bibitem{csikos2022optimal}
M{\'o}nika Csik{\'o}s and Nabil~H. Mustafa.
\newblock Optimal approximations made easy.
\newblock {\em Information Processing Letters}, 176:106250, 2022.
\newblock \href {https://doi.org/10.1016/j.ipl.2022.106250}
  {\path{doi:10.1016/j.ipl.2022.106250}}.

\bibitem{demaine2019structural}
Erik~D. Demaine, Felix Reidl, Peter Rossmanith, Fernando~S{\'a}nchez Villaamil,
  Somnath Sikdar, and Blair~D. Sullivan.
\newblock Structural sparsity of complex networks: Bounded expansion in random
  models and real-world graphs.
\newblock {\em Journal of Computer and System Sciences}, 105:199--241, 2019.
\newblock \href {https://doi.org/10.1016/j.jcss.2019.05.004}
  {\path{doi:10.1016/j.jcss.2019.05.004}}.

\bibitem{despres2017vapnikchervonenkis}
Christian J.~J. Despres.
\newblock The {V}apnik-{C}hervonenkis dimension of cubes in $\mathbb{R}^d$,
  2017.
\newblock \href {https://arxiv.org/abs/1412.6612} {\path{arXiv:1412.6612}}.

\bibitem{downey1993parameterized}
Rodney~G. Downey, Patricia~A. Evans, and Michael~R. Fellows.
\newblock Parameterized learning complexity.
\newblock In {\em Proceedings of the sixth annual conference on Computational
  learning theory}, pages 51--57, 1993.

\bibitem{ducoffe2021computing}
Guillaume Ducoffe.
\newblock On computing the average distance for some chordal-like graphs.
\newblock In {\em 46th International Symposium on Mathematical Foundations of
  Computer Science (MFCS)}, 2021.

\bibitem{ducoffe2022diameter}
Guillaume Ducoffe.
\newblock The diameter of {AT}-free graphs.
\newblock {\em Journal of Graph Theory}, 99(4):594--614, 2022.
\newblock \href {https://doi.org/10.1002/jgt.22754}
  {\path{doi:10.1002/jgt.22754}}.

\bibitem{ducoffe2022diameterb}
Guillaume Ducoffe, Michel Habib, and Laurent Viennot.
\newblock Diameter, eccentricities and distance oracle computations on
  {H}-minor free graphs and graphs of bounded (distance)
  {V}apnik-{C}hervonenkis dimension.
\newblock {\em SIAM Journal on Computing}, 51(5):1506--1534, 2022.
\newblock \href {https://doi.org/10.1137/20M136551}
  {\path{doi:10.1137/20M136551}}.

\bibitem{floyd1995sample}
Sally Floyd and Manfred Warmuth.
\newblock Sample compression, learnability, and the {V}apnik-{C}hervonenkis
  dimension.
\newblock {\em Machine learning}, 21:269--304, 1995.
\newblock \href {https://doi.org/10.1023/A:1022660318680}
  {\path{doi:10.1023/A:1022660318680}}.

\bibitem{FPS21}
Jacob Fox, János Pach, and Andrew Suk.
\newblock Bounded {VC}-dimension implies the {S}chur-{Erd\H{o}s} conjecture.
\newblock {\em Combinatorica}, 41(6):803--813, 2021.
\newblock \href {https://doi.org/10.1007/s00493-021-4530-9}
  {\path{doi:10.1007/s00493-021-4530-9}}.

\bibitem{HMPV00}
Michel Habib, Ross McConnell, Christophe Paul, and Laurent Viennot.
\newblock Lex-{BFS} and partition refinement, with applications to transitive
  orientation, interval graph recognition and consecutive ones testing.
\newblock {\em Theoretical Computer Science}, 234(1-2):59--84, 2000.
\newblock \href {https://doi.org/10.1016/S0304-3975(97)00241-7}
  {\path{doi:10.1016/S0304-3975(97)00241-7}}.

\bibitem{haussler1986epsilon}
David Haussler and Emo Welzl.
\newblock Epsilon-nets and simplex range queries.
\newblock In {\em Proceedings of the second annual symposium on Computational
  geometry}, pages 61--71, 1986.

\bibitem{holden1995practical}
Sean~B. Holden and Mahesan Niranjan.
\newblock On the practical applicability of {VC}-dimension bounds.
\newblock {\em Neural Computation}, 7(6):1265--1288, 1995.
\newblock \href {https://doi.org/10.1162/neco.1995.7.6.1265}
  {\path{doi:10.1162/neco.1995.7.6.1265}}.

\bibitem{kranakis1997vc}
Evangelos Kranakis, Danny Krizanc, Berthold Ruf, Jorge Urrutia, and Gerhard
  Woeginger.
\newblock The {VC}-dimension of set systems defined by graphs.
\newblock {\em Discrete Applied Mathematics}, 77(3):237--257, 1997.
\newblock \href {https://doi.org/10.1016/S0166-218X(96)00137-0}
  {\path{doi:10.1016/S0166-218X(96)00137-0}}.

\bibitem{P2P}
Jure Leskovec, Jon Kleinberg, and Christos Faloutsos.
\newblock Graph evolution: Densification and shrinking diameters.
\newblock {\em ACM transactions on Knowledge Discovery from Data - {TKDD}},
  1(1):2--42, 2007.
\newblock \href {https://doi.org/10.1145/1217299.1217301}
  {\path{doi:10.1145/1217299.1217301}}.

\bibitem{LiLS-sample-complexity-learning-S01}
Yi~Li, Philip~M. Long, and Aravind Srinivasan.
\newblock Improved {B}ounds on the {S}ample {C}omplexity of {L}earning.
\newblock {\em Journal of Computer and System Sciences}, 62(3):516--527, 2001.
\newblock \href {https://doi.org/10.1006/jcss.2000.1741}
  {\path{doi:10.1006/jcss.2000.1741}}.

\bibitem{LT10}
Tomasz {\L}uczak and St{\'e}phan Thomass{\'e}.
\newblock Coloring dense graphs via {VC}-dimension.
\newblock {\em arXiv preprint arXiv:1007.1670}, 2010.

\bibitem{manurangsi2017inapproximability}
Pasin Manurangsi and Aviad Rubinstein.
\newblock Inapproximability of {VC}-dimension and {L}ittlestone’s dimension.
\newblock In {\em Conference on Learning Theory}, pages 1432--1460. PMLR, 2017.

\bibitem{matouvsek1998geometric}
Ji{\v{r}}{\'\i} Matou{\v{s}}ek.
\newblock Geometric set systems.
\newblock In {\em European Congress of Mathematics: Budapest, July 22--26, 1996
  Volume II}, pages 1--27. Springer, 1998.

\bibitem{Mat99}
Ji{\v{r}}{\'i} Matou\v{s}ek.
\newblock {\em {VC}-Dimension and Discrepancy}, pages 137--169.
\newblock Springer Berlin Heidelberg, 1999.
\newblock \href {https://doi.org/10.1007/978-3-642-03942-3_5}
  {\path{doi:10.1007/978-3-642-03942-3_5}}.

\bibitem{BIOGRID}
Rose Oughtred, Chris Stark, Bobby-Joe Breitkreutz, Jennifer Rust, Lorrie
  Boucher, Christie Chang, Nadine Kolas, Lara O’Donnell, Genie Leung,
  Rochelle McAdam, et~al.
\newblock The {BioGRID} interaction database: 2019 update.
\newblock {\em Nucleic acids research}, 47(D1):D529--D541, 2019.

\bibitem{PT87}
Robert Paige and Robert~Endre Tarjan.
\newblock Three partition refinement algorithms.
\newblock {\em {SIAM} Journal on Computing}, 16(6):973--989, 1987.
\newblock \href {https://doi.org/10.1137/0216062} {\path{doi:10.1137/0216062}}.

\bibitem{papadimitriou1996limited}
Christos~H. Papadimitriou and Mihalis Yannakakis.
\newblock On limited nondeterminism and the complexity of the {VC} dimension.
\newblock {\em Journal of Computer and System Sciences}, 53(2):161--170, 1996.
\newblock \href {https://doi.org/10.1006/jcss.1996.0058}
  {\path{doi:10.1006/jcss.1996.0058}}.

\bibitem{DIP}
Lukasz Salwinski, Christopher~S. Miller, Adam~J. Smith, Frank~K. Pettit,
  James~U. Bowie, and David Eisenberg.
\newblock The database of interacting proteins: 2004 update.
\newblock {\em Nucleic acids research}, 32(suppl\_1):D449--D451, 2004.

\bibitem{DIMES}
Yuval Shavitt and Eran Shir.
\newblock {DIMES}: Let the internet measure itself.
\newblock {\em ACM SIGCOMM Computer Communication Review}, 35(5):71--74,
  October 2005.
\newblock \href {https://doi.org/10.1145/1096536.1096546}
  {\path{doi:10.1145/1096536.1096546}}.

\bibitem{CAIDA}
{The Cooperative Association for Internet Data Analysis ({CAIDA})}.
\newblock The {CAIDA AS} relationships dataset.
\newblock \url{http://www.caida.org/data/active/as-relationships/}, 2013.

\bibitem{doi:10.1137/1116025}
Vladimir~N. Vapnik and Alexey~Ya. Chervonenkis.
\newblock On the uniform convergence of relative frequencies of events to their
  probabilities.
\newblock {\em Theory of Probability \& Its Applications}, 16(2):264--280,
  1971.
\newblock \href {https://doi.org/10.1137/1116025} {\path{doi:10.1137/1116025}}.

\bibitem{welzl1988partition}
Emo Welzl.
\newblock Partition trees for triangle counting and other range searching
  problems.
\newblock In {\em Proceedings of the fourth annual symposium on Computational
  geometry}, pages 23--33, 1988.

\bibitem{wenocur1981some}
Roberta~S Wenocur and Richard~M Dudley.
\newblock Some special vapnik-chervonenkis classes.
\newblock {\em Discrete Mathematics}, 33(3):313--318, 1981.
\newblock \href {https://doi.org/10.1016/0012-365X(81)90274-0}
  {\path{doi:10.1016/0012-365X(81)90274-0}}.

\end{thebibliography}

\appendix
\section{Proof of Theorem \ref{thm:w1-hard}}
\label{sec:proof-w1-hard}

\begin{proof}
Our starting point is the FPT-reduction of Downey et al.~\cite{downey1993parameterized} who showed that computing the VC-dimension of general set systems is W[1]-hard by the following reduction  from $k$-clique detection.

\begin{theorem}[{Theorem 3 in \cite[Sec. 5]{downey1993parameterized}}]\label{red:generalVC-cliq}
Let $G=(V,E)$ be a graph and $k \leq |V|$ be a parameter. Then $G$ contains a $k$-clique if and only if there is a shattered $k$-subset in the set system ${\cal H}_G = (X,{\cal R})$, where  $X = V \times \{1,2,\ldots,k\}$ and ${\cal R} = {\cal R}_0 \cup {\cal R}_1 \cup {\cal R}_2 \cup {\cal R}_3$ with
\begin{itemize}
    \item[] ${\cal R}_0 = \{\emptyset\}$;
    \item[] ${\cal R}_1 = \{ \{(v,i)\} \mid v \in V, \ 1 \leq i \leq k \}$;
    \item[] ${\cal R}_2 = \{ \{(u,i),(v,j)\} \mid uv \in E, \ 1 \leq i,j \leq k \}$;
    \item[] ${\cal R}_3 = \{ \{ (v,i) \mid v \in V, \ i \in S \} \mid S \subseteq \{1,2,\ldots,k\} \ \text{and} \ |S| \geq 3 \}$.
\end{itemize}
\end{theorem}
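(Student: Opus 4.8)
The plan is to prove the two implications separately, and in both directions the decisive tool is a simple \emph{size-and-structure budget} for the four families ${\cal R}_0,{\cal R}_1,{\cal R}_2,{\cal R}_3$: any range of ${\cal R}_0\cup{\cal R}_1\cup{\cal R}_2$ has at most two elements, so it can only produce traces of size at most $2$; whereas a range of ${\cal R}_3$ is a union of full ``colour classes'' $V\times\{i\}$, so its trace on any set contains, for each coordinate $i$, either all or none of the elements whose second coordinate equals $i$.

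For the forward direction, suppose $G$ contains a $k$-clique $\{c_1,\dots,c_k\}$ and consider the ``diagonal'' set $S=\{(c_1,1),(c_2,2),\dots,(c_k,k)\}$. I would check directly that every $T\subseteq S$ is a trace: the empty trace comes from ${\cal R}_0$; each singleton from the matching range in ${\cal R}_1$; each pair $\{(c_i,i),(c_j,j)\}$ from the range $\{(c_i,i),(c_j,j)\}\in{\cal R}_2$, which is admissible precisely because $c_ic_j\in E$; and each $T=\{(c_i,i):i\in S'\}$ with $|S'|\ge 3$ from the range of ${\cal R}_3$ indexed by $S'$, whose trace on $S$ is exactly $\{(c_l,l):l\in S'\}$. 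This is routine once the diagonal is fixed, so $S$ is a shattered $k$-set.

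The backward direction carries the real work. Starting from a shattered $k$-set $S=\{(v_1,i_1),\dots,(v_k,i_k)\}$, I would show that \emph{every} pair $\{(v_a,i_a),(v_b,i_b)\}$ is in fact realised by a range of ${\cal R}_2$, never needing ${\cal R}_3$. For $k\ge 4$ this follows by first ruling out repeated second coordinates: if two elements shared a coordinate they would lie in a common colour class, so no ${\cal R}_3$-range could separate them; but once $k\ge 4$ there is a trace of size $\ge 3$ containing one and omitting the other, and such a trace can be realised neither by ${\cal R}_3$ (it splits a colour class) nor by the smaller families (it is too large), contradicting shattering. Hence the $i_t$ are distinct and equal $[k]$, which forces every admissible $S'$ with $|S'|\ge 3$ to meet $S$ in at least three points, so ${\cal R}_3$ cannot produce a pair at all. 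Thus each pair comes from ${\cal R}_2$, giving $v_a\ne v_b$ (no self-loops) and $v_av_b\in E$, so $\{v_1,\dots,v_k\}$ is a $k$-clique. The small cases $k\le 3$ I would dispatch separately: $k\le 2$ is immediate, and for $k=3$ the constraint $|S'|\ge 3$ with $S'\subseteq[k]=\{1,2,3\}$ forces $S'=\{1,2,3\}$, so ${\cal R}_3$ realises only the full set $S$; every pair must again be realised by ${\cal R}_2$, yielding the triangle on $\{v_1,v_2,v_3\}$.

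The main obstacle I anticipate is exactly this interaction with ${\cal R}_3$ in the degenerate regime: second coordinates need \emph{not} be distinct when $k=3$ (a repeated coordinate can still be shattered whenever the underlying vertices form a triangle), so the ``distinct coordinates'' normalisation available for $k\ge 4$ breaks down and must be replaced by the observation that ${\cal R}_3$ collapses to a single admissible range. Keeping the two regimes cleanly separated, and making sure the size-budget argument is applied to a trace that genuinely splits a colour class, is the delicate point; everything else reduces to the bookkeeping sketched above.
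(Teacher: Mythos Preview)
The paper does not prove this statement at all: it is quoted as Theorem~3 of Downey, Evans and Fellows~\cite{downey1993parameterized} and used as a black box in the proof of Theorem~\ref{thm:w1-hard}. There is therefore no ``paper's own proof'' to compare against.

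As for the argument you propose, it is correct. The forward direction via the diagonal set $\{(c_1,1),\dots,(c_k,k)\}$ is standard and works as you describe. In the backward direction, your size-and-structure budget is the right mechanism: for $k\ge 4$, picking a $3$-element subset of $S$ that contains $(v_a,i_a)$ but not $(v_b,i_b)$ does force a contradiction whenever $i_a=i_b$, since such a subset is too large for ${\cal R}_0\cup{\cal R}_1\cup{\cal R}_2$ and splits a colour class, hence cannot come from ${\cal R}_3$. Once the second coordinates are distinct, every ${\cal R}_3$-trace on $S$ has size $|S'|\ge 3$, so all pairs are realised by ${\cal R}_2$ and the clique follows. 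Your separate handling of $k=3$ is also correct and necessary: there ${\cal R}_3$ contributes only the full range $V\times\{1,2,3\}$, whose trace on $S$ is all of $S$, so pairs are again forced into ${\cal R}_2$ regardless of whether the second coordinates are distinct (and, as you observe, they need not be). The cases $k\le 2$ are indeed immediate since ${\cal R}_3=\emptyset$.
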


\noindent
Let $\mathcal{H}_G = (X,{\cal R})$ be defined as in Theorem~\ref{red:generalVC-cliq}. 
Note that the cumulative cardinalities of all sets in ${\cal R}$ sum up to $\bigO(kn+k^2m+k2^kn) = \bigO(k2^kn^2)$.\\
We construct a graph $H_G$ from $\mathcal{H}_G$ as follows: 
\begin{itemize}
    \item $V(H_G) = X \cup {\cal R}$;
    \item for every $x \in X$ and  $R \in {\cal R}$ such that $x \in R$, add the edge $\{x,R\}$ to $E(H_G)$;
    \item if $k \geq 3$, for every distinct $x,y \in X$ add the edge $\{x,y\}$ to $E(H_G)$. 
\end{itemize}
Note that the cost of computing $H_G$ is that of computing $\mathcal{H}_G$, with an additional cost in $\bigO(|X|^2) = \bigO(k^2n^2)$ if $k \ge 3$.
It is easy to see that if $Y \subseteq X$ is shattered in $\mathcal{H}_G$, then it is also shattered by the closed neighborhoods in $H_G$.
Thus, \Cref{red:generalVC-cliq} implies that if $G$ has a $k$-clique, then the VC-dimension of $\mathcal{H}_G$, and so the VC-dimension of $H_G$, is at least $k$.

It remains to show that if $G$ has no $k$-clique (or equivalently, the VC-dimension of $\mathcal{H}_G$ is strictly less than $k$), then the VC-dimension of $H_G$ is at most $k-1$.

If $k=0$, then $G$ must be an empty graph. Then, ${\cal R} = {\cal R}_0$. It implies that $H_G$ is reduced to one vertex, and so, its VC-dimension equals $0$.

If $k=1$, then $G$ is a stable set which implies $\mathcal R_2 = \emptyset$, and we also have $\mathcal R_3 = \emptyset$. In particular, ${\cal R} = {\cal R}_0 \cup {\cal R}_1$. By definition, $H_G$ contains a perfect matching between $X$ and $\R_1$, plus an isolated vertex for ${\cal R}_0$. Therefore, the VC-dimension of $H_G$ equals $1$.

Assume now that $\omega(G) \le k-1$ with $k \geq 3 $ and suppose for the sake of contradiction that there exists a $k$-element subset $Y\subseteq V(H_G)$ which is shattered by closed neighborhoods in $H_G$.
By definition, $H_G$ is a split graph with clique $X$ and independent set ${\cal R}$.

\begin{claim*}
We either have  $Y \subseteq X$ or $Y \subseteq {\cal R}$. 
\end{claim*}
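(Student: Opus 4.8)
The plan is to argue by contradiction: assuming the shattered $k$-set $Y$ meets both the clique $X$ and the stable set $\mathcal{R}$, I will exhibit a subset of $Y$ that admits no realizing trace. Write $A = Y \cap X$ and $B = Y \cap \mathcal{R}$, so that $A$ and $B$ partition $Y$ (since $V(H_G)=X\cup\mathcal{R}$ is a disjoint union) and, by assumption, both are nonempty. The starting point is to record the two structural constraints forced by the split structure of $H_G$. Since $X$ is a clique containing $A$, every clique vertex $u \in X$ satisfies $N[u] \cap Y = A \cup \{R \in B : u \in R\}$; in particular its trace \emph{always contains all of $A$}. Dually, since $\mathcal{R}$ is a stable set and each range $w \in \mathcal{R}$ is adjacent in $H_G$ only to the elements it contains, every $w \in \mathcal{R}$ satisfies $N[w] \cap Y = (w \cap A) \cup (\{w\} \cap B)$; in particular its trace \emph{contains at most one element of $B$}, namely $w$ itself if $w \in B$.

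With these two observations in hand, I would split on $|B|$. If $|B| \ge 2$, I pick any two ranges $R_1, R_2 \in B$ and consider the target trace $\{R_1, R_2\}$: no clique vertex can realize it because its trace contains the nonempty set $A$, and no range can realize it because its trace meets $B$ in at most one element, contradicting shattering. If $|B| = 1$, say $B = \{R_0\}$, then $|A| = k - 1 \ge 2$ because $k \ge 3$, so I may fix two distinct clique vertices $x_1, x_2 \in A$. Here the crux is to compare the two traces $\{x_1, R_0\}$ and $\{x_2, R_0\}$. To realize $\{x_1, R_0\}$ one cannot use a clique vertex (its trace would also contain $x_2 \in A$), and the only range meeting $B$ is $R_0$ itself; hence the realizing vertex must be $R_0$, which forces $R_0 \cap A = \{x_1\}$. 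The symmetric argument applied to $\{x_2, R_0\}$ forces $R_0 \cap A = \{x_2\}$, contradicting $x_1 \neq x_2$. In either case we reach a contradiction, so $Y \subseteq X$ or $Y \subseteq \mathcal{R}$.

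The main obstacle is not the $|B| \ge 2$ case, which is immediate, but the $|B| = 1$ case: there a naive cardinality count is insufficient, and one genuinely needs to use shattering to pin down $R_0 \cap A$ from two different singletons of $A$ simultaneously. Reaching that point cleanly relies on first establishing the right asymmetry between clique and stable-set vertices (\emph{all} of $A$ versus \emph{at most one} element of $B$), after which the case analysis is short. I would finally double-check that the hypothesis $k \ge 3$ is exactly what guarantees $|A| \ge 2$ once $|B| = 1$, which is what makes the two-singleton argument available.
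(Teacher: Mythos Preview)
Your argument is correct. Both the paper's proof and yours exploit the split structure of $H_G$, but the decompositions differ. The paper fixes a single triple $x\in X\cap Y$, $z\in\mathcal{R}\cap Y$, and a third element $y\in Y$; it first forces $y\in N[z]$ via the trace $\{y,z\}$, and then derives a contradiction from the trace $\{x,z\}$ by invoking that $N[z]$ is a clique (so any realizer $u\in N[z]$ is automatically adjacent to $y$). You instead set up the global partition $Y=A\sqcup B$, record the two asymmetric constraints (traces from $X$ contain all of $A$; traces from $\mathcal{R}$ meet $B$ in at most one point), and branch on $|B|$. Your $|B|\ge 2$ case is dispatched by a single target trace, while your $|B|=1$ case pins down $R_0\cap A$ to two different singletons. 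The paper's route is a bit slicker in that it avoids a case split on $|B|$ and handles the third element $y$ uniformly via the clique $N[z]$; your route is perhaps more transparent, since the two structural observations about traces from $X$ versus $\mathcal{R}$ make it immediately visible which target subsets are unrealizable. Both arguments use $k\ge 3$ in the same place, namely to guarantee a third element beyond the obligatory pair from $X$ and $\mathcal{R}$.
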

\begin{proof}
The proof is based on observations similar to the ones in~\cite[Proof of Lemma $11$]{ducoffe2022diameter}.
Assume that $Y$ intersects both $X$ and $\R$.
Since $|Y| \geq 3$, we can take distinct elements $x,y,z \in Y$ with $x \in X$, $z \in \R$ and $y$ belonging to either $X$ or $\R$. 
Suppose first that $y \notin N[z]$.
Let $v \in V(H_G)$ be such that $N[v] \cap \{x,y,z\} = \{y,z\}$.
Since $y \notin N[z]$ and ${\cal R}$ is a stable set, necessarily $v \in X$.
But then, $x \in N[v]$ because $X$ is a clique, thus contradicting that $N[v] \cap \{x,y,z\} = \{y,z\}$.
As a result, we must have $y \in N[z]$.
Let $u \in V(H_G)$ be such that $N[u] \cap \{x,y,z\} = \{x,z\}$.
Since $u,y \in N[z]$ and $N[z]$ is a clique, necessarily $y \in N[u]$.
Again, the latter contradicts our assumption that $N[u] \cap \{x,y,z\} = \{x,z\}$.
\end{proof}

First we suppose that $Y \subseteq X$.
Since we assumed that the VC-dimension of $\mathcal{H}_G$ is strictly less than $k$, $Y$ is not shattered in $\mathcal{H}_G$.
Thus  
\begin{equation}\label{eq:neigh-constraint}
    \text{there exists $Z \subseteq Y$ that is not a trace of any range in $\R$.}
\end{equation} 
Since $Y$ is shattered by $H_G$, there is a vertex $v_Z \in V(H_G)$ such that $N[v_Z] \cap Y = Z$. The definition of $H_G$ and \eqref{eq:neigh-constraint} imply that
there is no $v \in \R$ such that its neighborhood in $H_G$ satisfies $N[v] \cap Y = Z$ and so we get that $v_Z \in X$. Since $H_G[X]$ is a clique, we conclude that $Z = Y$. However, there exists a range in ${\cal R}_3$ which contains the entire set $Y$ (namely, the one corresponding to the full set $S=\{1,2\ldots,k\}$), a contradiction with \eqref{eq:neigh-constraint}.


Finally, we consider the case of $Y \subseteq {\cal R}$.
Let $y \in Y$ be any vertex. Since $Y$ is shattered, each of the $2^{k-1}$ subsets of $Y$ that contain $y$ is either the trace of $N[y]$ or the trace of $N[x]$ for some neighbor $x$ of $y$. Therefore, $|N[y]| \ge 2^{k-1} \geq 4$, and so the range in $\mathcal H_G$ corresponding to $y$ has size at least $3$, in particular, $y \in {\cal R}_3$.
For any $Y \subseteq \R_3$, the closed neighborhoods of the vertices of $H_G$ can have the following traces on $Y$:
\begin{itemize}
    \item if $v \in \R$, then $N[v] \cap Y$ is equal to $\{v\}$ if $ v \in Y$ and $\emptyset$ otherwise;
    \item if $v = (x,i) \in X$, then $N[v] \cap Y$ contains those vertices of $Y$ that correspond to index sets $S$ with $i \in S$ (see the definition of $\R_3$).
\end{itemize}
That is, the neighborhoods of vertices in $\R$ can only induce the empty set and the $k$ singleton traces on $Y$,  and for any $x,y \in V(G)$ and $i \in \{1,2\ldots,k\}$, we have $N[(x,i)] \cap Y = N[(y,i)] \cap Y$.
This implies that the number of vertices in $X$ that have pairwise different neighborhoods in $Y$ is at most $k$. On the other hand, since $Y$ is shattered, we need to obtain each of the $2^{k}$ subsets of $Y$ as a trace, which implies that $2^{k} \leq k + k +1$, and thus $k \leq 2$, a contradiction.
\end{proof}

\section{Simple bounds}
\label{sec:more-simple-bounds}

\begin{lemma}\label{lem:degeneracy}
    A $k$-degenerate graph has VC-dimension at most $k+1$,
    and this bound is sharp.
\end{lemma}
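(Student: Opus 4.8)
The plan is to prove Lemma~\ref{lem:degeneracy} in two parts: an upper bound $\vcdim(G) \le k+1$ for any $k$-degenerate graph, and a matching construction showing sharpness. The upper bound is actually an immediate consequence of the tools already developed, so the main work is identifying the right inequality.

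For the upper bound, suppose $X$ is a shattered set of size $d = \vcdim(G)$. The key idea is to exploit degeneracy through the corollary we have already proved. Since $G$ is $k$-degenerate, the induced subgraph $G[X]$ contains a vertex $x$ whose degree \emph{within $X$} is at most $k$; that is, $|N[x] \cap X| \le k+1$ (counting $x$ itself under our convention). Now I would invoke the shattering property: since $X$ is shattered, every subset $Y \subseteq X$ arises as a trace $N[v] \cap X$, and in particular $x$'s own closed neighborhood realizes some trace on $X$ of size at most $k+1$. The subtlety is that degeneracy bounds the trace of $x$, not the trace of an arbitrary vertex. The cleanest route is to observe that the $2^d$ distinct traces on $X$ must include all subsets, but the vertex achieving the trace $X$ itself (the full set) must be adjacent to every element of $X$, and more usefully, consider that a vertex $v$ realizing trace $Y$ where $Y$ is the specific subset forcing $x$ to have many neighbors leads to a contradiction. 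Concretely, since $X$ is shattered, for the full subset $X$ there is a vertex $w$ with $N[w] \cap X = X$, so $w$ is adjacent to all of $X$; but then I want to bound things via the low-degeneracy vertex $x$ inside $X$.

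The most reliable approach is to apply Corollary~\ref{cor:degbound} in the \emph{induced subgraph} $G[X']$ where $X' = B[x,1] \cap X$, or better, to use the degeneracy-based argument directly: pick the vertex $x \in X$ of minimum degree in $G[X]$, so it has at most $k$ neighbors in $X$, giving $|N[x] \cap X| \le k+1$. Since $X$ is shattered, consider the subfamily of traces. A standard argument shows that the number of distinct sets $Y$ shattered forces $|X| \le |N[x] \cap X|$ when we restrict attention properly: each of the $2^{|X|}$ subsets must be realized, and by the Sauer--Shelah shattering structure restricted to the neighborhood constraints, the degeneracy caps the growth. I expect the cleanest formal version mirrors Corollary~\ref{cor:degbound} but with $\deg(x)$ replaced by the degree inside $X$: since $x \in X$ and $X$ is shattered, applying Lemma~\ref{lem:base} with the singleton $Y = \{x\}$ bounds $2^{|X|-1} \le |N[x] \cap X|$ is \emph{not} quite right, so instead the correct statement yields $|X| \le k+1$ directly from the fact that $x$'s trace on $X$ has at most $k+1$ elements while shattering requires realizing the subset $\{x\} \cup (\text{non-neighbors})$—and the main obstacle will be phrasing this so that degeneracy of $G$ (a global hereditary property) transfers to the local trace count at $x$.

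For sharpness, I would exhibit, for each $k$, a $k$-degenerate graph attaining VC-dimension exactly $k+1$. A natural candidate is to take a shattered set $X$ of size $k+1$ together with a vertex for each of the $2^{k+1}$ traces, then verify that this incidence graph is $k$-degenerate. Since realizing all subsets of a $(k+1)$-set requires neighborhoods of varying sizes, and the ground set $X$ has only $k+1$ vertices while the "witness" vertices form a structure where iteratively removing low-degree witnesses keeps the degeneracy at $k$, I expect this construction to work; the verification that the witness-vertex graph has every subgraph containing a vertex of degree at most $k$ is the routine part, which I would not grind through here.

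\begin{proof}
See Appendix~\ref{sec:more-simple-bounds}.
\end{proof}
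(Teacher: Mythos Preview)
Your proposal is not a proof; it is a plan that never settles on a working argument, and the core idea you keep circling is actually wrong. You repeatedly try to apply $k$-degeneracy to the induced subgraph $G[X]$ in order to find some $x\in X$ with at most $k$ neighbours \emph{inside $X$}, and then hope that $|N[x]\cap X|\le k+1$ forces $|X|\le k+1$. It does not. A shattered set $X$ can be an independent set of $G$ (so every $x\in X$ has $N[x]\cap X=\{x\}$, trivially $\le k+1$) while still being arbitrarily large relative to $k$, because all nontrivial traces on $X$ are realized by vertices \emph{outside} $X$. Your own hesitations (``is not quite right'', ``the main obstacle will be phrasing this'') are symptoms of this: there is no phrasing that rescues the approach.

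The missing idea, which the paper uses, is to apply degeneracy not to $G[X]$ but to $G[Z]$ where $Z=X\cup\{v_Y: Y\subseteq X\}$ contains one witness vertex for each trace. One then peels off from $Z\setminus X$ all witnesses of degree at most $k$ in the current subgraph; what remains still contains $X$ and is $k$-degenerate, so some $x\in X$ has at most $k$ neighbours there. The removed witnesses all correspond to subsets $Y$ of size at most $k$, so $x$ has at most $k+\sum_{i=0}^{k-1}\binom{|X|-1}{i}$ neighbours in $G[Z]$, whereas shattering forces at least $2^{|X|-1}-1$; comparing gives $|X|\le k+1$. For sharpness the paper simply notes the case $k=1$ (stars are $1$-degenerate with VC-dimension $2$); your sketched general construction is plausible but you neither build it nor check its degeneracy.
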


\begin{proof}
    Let $G = (V, E)$ be a $k$-degenerate graph and consider a shattered set $X\subseteq V$. Let $Z = X \cup \{v_Y \mid Y \subseteq X\}$, where $v_Y$ denotes an arbitrary vertex such that $N[v_Y] \cap X = Y$. We consider the induced subgraph $G[Z]$ and we iteratively remove all vertices of $Z \setminus X$ with at most $k$ neighbours. Let $Z'$ be the set of remaining vertices. Note that $X \subseteq Z'$. Since $G[Z']$ is also $k$-degenerate, it has some vertex $x$ with at most $k$ neighbours. Since we iteratively removed all vertices with at most $k$ neighbours in $Z \setminus X$, then necessarily $x \in X$. Furthermore, all vertices of $Z \setminus Z'$ must be of the form $v_Y$ for some $|Y| \leq k$. 
    As a result, the number of neighbours of $x$ in $G[Z]$ is no more than $k + \sum_{i=0}^{k-1}\binom{|X|-1}{i}$. 
    However, since $X$ is shattered, and there are $2^{|X|-1}$ subsets of $X$ containing vertex $x$, we must have $|N[x]\cap Z| \ge 2^{|X|-1}$.
    In particular, the number of neighbours of $x$ in $G[Z]$ must be at least $2^{|X|-1}-1$.
    Suppose by contradiction that $k < |X|-1$. 
    Then, 
    \begin{align*}
        k + \sum_{i=0}^{k-1}\binom{|X|-1}{i} &= k + 2^{|X|-1} - \sum_{i=k}^{|X|-1} \binom{|X|-1}{i}  \\
        &\le k + 2^{|X|-1} - \sum_{i=|X|-2}^{|X|-1} \binom{|X|-1}{i} \\
        &= k + 2^{|X|-1} - \binom{|X|-1}{|X|-2} - \binom{|X|-1}{|X|-1} \\
        &= k + 2^{|X|-1} - |X| \\
        &< 2^{|X|-1} -1
    \end{align*}
    A contradiction.
    Hence, $|X| \leq k+1$.
    This is sharp for $k=1$ because trees are $1$-degenerate, and there exist $2$-shattered subsets in trees ({\it e.g.}, any two leaves in a star with at least three leaves).
\end{proof}

\begin{table}[t]
  \caption{VC-dimension lower bounds computed with $maxvisits=16,32,64,128,256$ (bold values indicate that the bound matches the exact value), and the corresponding execution time, the ``read'' column corresponds to the time for reading the graph.}
  \label{tab:lb}
  \small
\centering
\resizebox{.9\linewidth}{!}{%
\setlength{\tabcolsep}{4pt}
\begin{tabular}{|l|cccccc|lrrrrr|}
\hline
& & \multicolumn{5}{c|}{lower-bound}
& \multicolumn{6}{c|}{time (s)}\\
\hline
\textbf{Graph} 
& VC-dim
& 16 & 32 & 64 & 128 & 256
& read
& 16 & 32 & 64 & 128 & 256
\\
\hline
\hline
\ExpandableInput{tbls/table2.tex}
\hline
\end{tabular}%
}
\end{table}

Finally, we show upper bounds on the VC-dimension in terms of sizes of maximum and maximal matchings.
\begin{lemma}\label{lem:matching}
    Let $G=(V, E)$ be a non-empty graph and $M$ be  a maximal matching of $G$. Then the VC-dimension of $G$ is at most $2|M|$. Moreover, if $\nu(G)$ is the size of a maximum matching in $G$, then we have $\vcdim(G) \leq \nu(G)+1$.
\end{lemma}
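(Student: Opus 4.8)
The plan is to treat the two inequalities separately, in each case extracting a combinatorial certificate from a shattered set $X$ of maximum size $d=\vcdim(G)$. For the bound $\vcdim(G)\le\nu(G)+1$ I would produce a matching of size $d-1$, and for $\vcdim(G)\le 2|M|$ I would exploit the vertex cover supplied by the maximal matching together with a counting argument on realizable traces.

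For $\vcdim(G)\le\nu(G)+1$, I would first split $X=X_0\sqcup X_1$, where $X_0$ consists of the vertices of $X$ having no neighbour inside $X$ (equivalently, $x\in X_0$ iff the singleton trace $\{x\}$ is realised by $x$ itself) and $X_1=X\setminus X_0$. For each $x\in X_1$ the witness $v_x$ with $N[v_x]\cap X=\{x\}$ lies outside $X$ and is adjacent to $x$; since distinct traces force distinct witnesses, the edges $\{x,v_x\}$ for $x\in X_1$ form a matching $M_1$ of size $|X_1|$ with all $v_x\in V\setminus X$. The set $X_0$ is an independent shattered set, all of whose neighbours lie in $V\setminus X$. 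I would then apply the defect form of Hall's theorem to the bipartite graph between $X_0$ and $V\setminus X$: for every $S\subseteq X_0$ each of the $2^{|S|}-|S|-1$ subsets of $S$ of size at least two must be realised by a vertex adjacent to at least two elements of $S$, hence lying in $N(S)$, so $|N(S)|\ge 2^{|S|}-|S|-1\ge|S|-1$ (the last step being trivial for $|S|\le 1$). This yields a matching $M_0$ saturating at least $|X_0|-1$ vertices of $X_0$. Finally I would check disjointness of $M_0$ and $M_1$: since each $v_x$ has trace $\{x\}$ on $X$, it is adjacent to no vertex of $X_0$ and cannot be used by $M_0$. Thus $M_0\cup M_1$ has size at least $|X|-1=d-1$, giving $\nu(G)\ge d-1$.

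For $\vcdim(G)\le 2|M|$, I would use that the endpoints $C=V(M)$ of a maximal matching form a vertex cover of size $2|M|$, so $I=V\setminus C$ is independent and every vertex of $I$ has all its neighbours in $C$. Writing $X_C=X\cap C$ and $X_I=X\cap I$, a vertex $v\in I$ satisfies $N[v]\cap X_I\subseteq\{v\}$; hence a subset $Y\subseteq X$ with $Y\cap X_I\neq\emptyset$ can be realised by an $I$-vertex only if that vertex is the single element of $Y$ lying in $X_I$, which pins $Y$ down to one set per vertex of $X_I$. Therefore all but at most $|X_I|$ of the $2^{|X_C|}(2^{|X_I|}-1)$ subsets meeting $X_I$ must be realised by the at most $|C|=2|M|$ vertices of $C$, so $2^{|X_C|}(2^{|X_I|}-1)\le 2|M|+|X_I|$. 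Combining this with the trivial inequality $|X_C|\le|C|=2|M|$, a short case analysis (the case $X_I=\emptyset$ being immediate) forces $|X|=|X_C|+|X_I|\le 2|M|$.

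The main obstacle is this second bound. It does \emph{not} follow from the first: a maximal matching only guarantees $\nu(G)\le 2|M|$, so $\vcdim(G)\le\nu(G)+1$ merely gives $2|M|+1$, and for non-maximum maximal matchings (for instance the middle edge of a path on four vertices) the quantity $2|M|$ is actually the sharper bound, so a genuinely separate argument is unavoidable. Within that argument the delicate point is that the naive count keeping only subsets meeting $X_I$ in at least two vertices, namely $2^{|X_C|}(2^{|X_I|}-|X_I|-1)\le 2|M|$, is too weak and fails to exclude shattered sets of size $2|M|+1$; one must also charge the subsets meeting $X_I$ in exactly one vertex, as above, and then dispatch the remaining small cases by hand in order to remove the stray $+1$.
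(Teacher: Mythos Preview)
Your proof is correct, but the route you take for the bound $\vcdim(G)\le 2|M|$ is considerably more involved than the paper's. The paper simply observes that if the maximal matching $M$ already covers every vertex of a largest shattered set $X$, then $|X|\le 2|M|$ trivially; otherwise some $x\in X$ is uncovered, and by maximality of $M$ every neighbour of $x$ is covered. Since $x$ has at least $2^{|X|-1}-1$ neighbours (one for each subset of $X$ containing $x$, except possibly $\{x\}$ itself), this gives $2|M|\ge 2^{|X|-1}-1\ge|X|$ for $|X|\ge 3$, and the small cases are immediate. Your vertex-cover counting argument, splitting $X$ into $X_C$ and $X_I$ and bounding the number of traces meeting $X_I$, reaches the same conclusion but requires the case analysis you yourself flag as delicate; the paper sidesteps all of that with a single uncovered vertex.

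For the bound $\vcdim(G)\le\nu(G)+1$, you and the paper follow the same decomposition $X=X_0\sqcup X_1$ and match $X_1$ via the singleton witnesses $v_x$. The only difference is how the independent part $X_0$ is handled: the paper matches $X_0$ greedily, repeatedly picking the witness $v_Y$ for the current unmatched set $Y\subseteq X_0$ and pairing it with some $y\in Y$, then recursing on $Y\setminus\{y\}$; you instead verify the defect Hall condition $|N(S)|\ge|S|-1$ via the witnesses for subsets of $S$ of size at least two. Both are valid and of comparable length; the paper's explicit construction avoids invoking Hall's theorem, while your approach makes the existence of the matching a one-line consequence once the Hall estimate is checked.
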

\begin{proof}
Since $G$ is non-empty, any maximal matching has at least one edge and thus the statements trivially hold if $\vcdim(G) \leq 2$. 
    Let $X \subseteq V$ be a shattered set of size $\vcdim(G)$. If $M$ covers every vertex of $X$, then we have $|M| \geq \frac{1}{2}\cdot \vcdim(G)$. Assume that there exists $x \in X$ which is not covered by $M$. Since $M$ is maximal, each of the at least $2^{\vcdim(G)-1}-1$ neighbors 
    of $x$ need to be covered by $M$, which implies $|M|\geq \frac 1 2 \cdot \left(2^{\vcdim(G)-1} -1\right) \geq \frac{1}{2} \vcdim(G)$ for any graph with $\vcdim(G) \geq 3$.

    To show that $\vcdim(G) \leq \nu(G)+1$, it  is sufficient to construct a matching where $|X|-1$ vertices of $X$ are matched to a vertex outside of $X$. Since $X$ is shattered, for any $x \in X$, there exists a vertex $v_x \in V(G)$ such that $N[v_x] \cap X = \{x\}$. Observe that  we have either  $v_x \notin X$ or $v_x = x$  and the second option is only possible if $x$ has no neighbors in $X$. 
    We build a matching $M$ of size $|X|-1$ as follows. First we add all edges $\{x,v_x\}$ to $M$ where $x \in X$ is such that it has at least one neighbor in $X$. After this, if there is a remaining set $Y \subseteq X$ which is not yet covered by $M$, then $Y$ has to be a stable set. Since $X$ is shattered, there is a vertex $v_{Y}$ such that  $v_{Y}\cap X = Y$. As $Y$ is a stable set,  $v_{Y} \notin Y$ if $|Y|\geq 2$. Thus, we can cover each element of $Y$ (except maybe one) by taking any $y \in Y$, adding $\{v_{Y}, y\}$ to $M$, and recursing on $Y' = Y \setminus \{y\}$. In the end, we get a matching of size $|M| \geq |X\setminus Y| + |Y| - 1 = \vcdim(G) -1$.
\end{proof}

\noindent
Note that stars with at least three leaves have matching number one and VC-dimension two, thus the bound of Lemma~\ref{lem:matching} is sharp.

\begin{table}[b]
  \caption{Some upper-bounds of Lemma~\ref{lem:upper-bounds} compared to lower-bounds and true VC-dimension.}
  \label{tab:bounds}
  \small
\centering
\resizebox{.69\linewidth}{!}{%
\setlength{\tabcolsep}{4pt}
\begin{tabular}{|l|rrrrr|}
\ExpandableInput{tbls/bounds.tex}
\end{tabular}%
}
\end{table}

\section{Experiments}

\subsection{Lower-bound computation}
\label{sec:lb}

Table~\ref{tab:lb} gives the lower-bounds obtained by our lower-bound heuristic for various values of $maxvisits$ while $KBG$ uses $maxvisits=64$. It also provides the corresponding running times for reading the graph and computing the lower-bound. As a reference, column ``read'' indicates the time spent for just reading the graph.

We observe a tradeof where increasing $maxvisits$ provides generally a better lower-bound at the cost of a longer running time. Note the exception of \graph{twitter-combined}
for which the best bound is obtained only for $maxvisits=32$.
We also note that the running time stays within a factor 5 of the time taken for reading the graph, even for $maxvisits=256$. The exception of \graph{twitter-combined} let us think that there is room for improvement of the tuning of our heuristic. For example, the choice of $maxvisits/2$ for limiting the for loop of \textsc{ExploreShattered} was not intensively explored.
However, it already provides lower-bounds which are often exact or one less than the true VC-dimension. This is indeed the case for $maxivists=128$, and almost the case for $maxvisits=64$ where \graph{notreDame} is the only exception with a lower-bound which is two less than the VC-dimension.


\subsection{Upper bounds}
\label{sec:ub}

Table~\ref{tab:bounds} lists the upper-bounds we can quickly compute on our dataset. The degree upper-bound $\floor{\log \Delta}+1$ where $\Delta$ is the maximum degree appears to always be the best one. However, it can be as large as twice the true value and gives a poor confidence bound compared to what we obtained for lower-bounds. The node upper-bound $\floor{\log n}$ where $n$ is the number of nodes is almost always greater. The only graph where it matches the degree upper-bound is \graph{facebook-combined} which has high maximum degree $\Delta$ compared to its number $n$ of nodes as it satisfies $\Delta > n/4$ (see Table~\ref{tab:ref}). Finally, the degeneracy upper-bound appears to be good on graphs with low degree, that is road networks and grid like graphs, while it can be very high for graphs with many high degree nodes. Note that it can be quite high even for graphs with relatively low VC-dimension such as \graph{ca-HepPh}.







\end{document}